\DeclareMathOperator*{\slim}{s-lim}
\DeclareMathOperator{\supp}{supp}
\newtheorem{ass}{Assumption}[section]
\newtheorem{thm}[ass]{Theorem}
\newtheorem{prop}[ass]{Proposition}
\newtheorem{lem}[ass]{Lemma}
\begin{document}
\begin{flushleft}
{\Large \bf Inverse scattering for $N$-body time-decaying harmonic oscillators}
\end{flushleft}

\begin{flushleft}
{\large Atsuhide ISHIDA}\\
{Katsushika Division, Institute of Arts and Sciences, Tokyo University of Science, 6-3-1 Niijuku, Katsushika-ku, Tokyo 125-8585, Japan\\ 
Email: aishida@rs.tus.ac.jp
}
\end{flushleft}

\begin{abstract}
In the previous study \cite{Is7}, the author proved the uniqueness of short-range potential functions using the Enss-Weder time-dependent method \cite{EnWe} for a two-body quantum system described by time-decaying harmonic oscillators. In this study, we extend the result of \cite{Is7} to the $N$-body case. We use the approaches developed in \cite{EnWe, VaWe, We} to prove that the high-velocity limit of the scattering operator uniquely determines all the pairwise interaction potentials among the $N$ particles, focusing respectively on each fixed pair of particles.
 \end{abstract}

\quad\textit{Keywords}: $N$-body scattering theory, wave operator, scattering operator\par
\quad\textit{MSC}2020: 35R30, 81U10, 81U40

\section{Introduction\label{introduction}}
We consider a quantum system governed by the $N$-particle harmonic oscillators and interactional potential functions described by the Hamiltonian
\begin{equation}
\tilde{H}(t)=-\sum_{j=1}^N\Delta_{r_j}/(2m_j)+k(t)\sum_{j=1}^Nm_jr_j^2/2+\sum_{1\leqslant j<k\leqslant N}V_{jk}(r_k-r_j)\label{perturbed1}
\end{equation}
acting on $L^2({\mathbb{R}^{dN}})$ with $N\geqslant2$ and $d\geqslant2$, where $r_j\in\mathbb{R}^d$ is the position of the $j$th particle, $m_j>0$ is the mass, $\Delta_{r_j}$ is the Laplacian for $r_j$, and $V_{jk}$ is the interaction between $j$th and $k$th particles. The time-dependent coefficient $k(t)$ is defined as
\begin{equation}
k(t)=
\begin{cases}
\ \omega^2 & \quad \mbox{if}\quad|t|< T,\\
\ \sigma/t^2 & \quad \mbox{if}\quad|t|\geqslant T\label{coeff}
\end{cases}
\end{equation}
for $0<\sigma<1/4$, $\omega>0$ and $T>0$.

We now introduce the Jacobi coordinates in which the center of mass of all $N$ particles is located at the origin (see \cite[Chapter 3]{Iso} or \cite[XI.5]{ReSi2}). Let $\mathscr{X}$ be the subspace of $\mathbb{R}^{dN}$ described by
\begin{equation}
\mathscr{X}=\left\{(r_1,\ldots,r_N)\in\mathbb{R}^{dN}\Biggm|\sum_{j=1}^Nm_jr_j=0\right\}\simeq\mathbb{R}^{d(N-1)}
\end{equation}
 and $\mathscr{X}_{\rm cm}$ be the orthogonal space for $\mathscr{X}$ associated with the scalar product
\begin{equation}
(r,\tilde{r})_{\mathbb{R}^{dN}}=\sum_{j=1}^Nm_jr_j\cdot\tilde{r}_j\label{scalar_product}
\end{equation}
on $\mathbb{R}^{dN}$ for $r=(r_1,\ldots,r_N)$ and $\tilde{r}=(\tilde{r}_1,\ldots,\tilde{r}_N)\in\mathbb{R}^{dN}$. When the position of the center of mass is removed, the unperturbed Hamiltonian of \eqref{perturbed1} is given by
\begin{equation}
H_0(t)=-\sum_{j=1}^N\Delta_{r_j}/(2m_j)+\Delta_{\rm cm}/(2M)+k(t)\sum_{j=1}^Nm_jr_j^2/2-k(t)Mr_{\rm cm}^2/2\label{free1}
\end{equation}
acting on $L^2(\mathscr{X})$, where $M=\sum_{j=1}^nm_j$ is the total mass, $r_{\rm cm}=\sum_{j=1}^Nm_jr_j/M\in\mathscr{X}_{\rm cm}$ is the position of the center of mass, and $\Delta_{\rm cm}$ is the Laplace-Beltrami operator on $\mathscr{X}_{\rm cm}$. This expression \eqref{free1} can be simplified by rewriting it in Jacobi coordinates as
\begin{equation}
y_j=r_{j+1}-\sum_{k=1}^jm_kr_k/\sum_{k=1}^jm_k\in\mathscr{X}\label{jacobi1}
\end{equation}
for $1\leqslant j\leqslant N-1$. The reduced mass $\mu_j$ is defined as
\begin{equation}
1/\mu_j=1/\sum_{k=1}^jm_k+1/m_{j+1}.
\end{equation}
Under these notations, we have
\begin{gather}
\sum_{j=1}^Nm_jr_j^2=Mr_{\rm cm}^2+\sum_{j=1}^{N-1}\mu_jy_j^2,\\
\sum_{j=1}^N\Delta_{r_j}/m_j=\Delta_{\rm cm}/M+\sum_{j=1}^{N-1}\Delta_{y_j}/\mu_j
\end{gather}
where $\Delta_{y_j}$ is the Laplace-Beltrami operator on $\mathscr{X}$. Therefore, in these coordinates, \eqref{free1} can be written as the unperturbed Hamiltonian
\begin{equation}
H_0(t)=-\sum_{j=1}^{N-1}\Delta_{x_j}/2+k(t)\sum_{j=1}^{N-1}x_j^2/2\label{free2}
\end{equation}
acting on $L^2(\mathscr{X})$ with $x_j=\sqrt{\mu_j}y_j$ for simplicity.

We now give the assumptions for each potential function $V_{jk}$ in \eqref{perturbed1} as a perturbation of $H_0(t)$. Let $\lambda$ satisfy
\begin{equation}
0<\lambda=(1-\sqrt{1-4\sigma})/2<1/2.
\end{equation}
This $\lambda$ describes the effective scaling rate induced by the time-decaying harmonic potential and is directly related to the classical trajectories $x(t)=O(|t|^{1-\lambda})$ as $|t|\rightarrow\infty$. The condition $0<\lambda<1/2$ ensures the short-range character of the interaction under the scaling associated with the propagator. The bracket $\langle\cdot\rangle$ is defined by $\sqrt{1+|\cdot|}$ and $A\lesssim B$ means that there exists a constant $C>0$ such that $A\leqslant CB$.

\begin{ass}\label{ass}
Each function $V_{jk}:\mathbb{R}^d\rightarrow\mathbb{R}$ for $1\leqslant j<k\leqslant N$ is a multiplication operator and decomposed into two parts:
\begin{equation}
V_{jk}=V_{jk}^{\rm bdd}+V_{jk}^{\rm sing}.
\end{equation}
The bounded part $V_{jk}^{\rm bdd}\in L^\infty(\mathbb{R}^d)$ satisfies
\begin{equation}
|V_{jk}^{\rm bdd}(x)|\lesssim\langle x\rangle^{-\rho}
\end{equation}
with $\rho>1/(1-\lambda)$. The singular part $V_{jk}^{\rm sing}\in L^q(\mathbb{R}^d)$ is compactly supported on $\mathscr{X}$, where the Lebesgue exponent $q$ satisfies
\begin{equation}
\infty>q
\begin{cases}
\ =2\quad & \mbox{\rm if}\quad n\leqslant3,\\
\ >n/2\quad & \mbox{\rm if}\quad n\geqslant4.\label{sing}
\end{cases}
\end{equation}
\end{ass}

It is well-known that each $V_{jk}^{\rm sing}$ is $-\Delta$-bounded infinitesimally. The interacting Hamiltonian is
\begin{equation}
H(t)=H_0(t)+\sum_{1\leqslant j<k\leqslant N}V_{jk}(r_k-r_j)\label{perturbed2}
\end{equation}
acting on $L^2(\mathscr{X})$. Here we note $r_k-r_j\in\mathscr{X}$ for any $1\leqslant j<k\leqslant N$. Under these definitions, the perturbed full Hamiltonian \eqref{perturbed1} is decomposed into
\begin{equation}
\tilde{H}(t)=H(t)\otimes1+1\otimes(-\Delta_{\rm cm}/(2M)-k(t)Mx_{\rm cm}^2/2)
\end{equation}
on $L^2(\mathscr{X})\otimes L^2({\mathscr{X}_{\rm cm}})$, where $1$ denotes the identities on $L^2(\mathscr{X})$ and $L^2(\mathscr{X}_{\rm cm})$. Therefore, we will concentrate on studying the quantum system governed by $H(t)$. Because two parameter unitary propagators generated by $H_0(t)$ and $H(t)$ exist uniquely by the result of \cite[Theorem 6 and Remark (a)]{Ya}, we denote these propagators as $U_0(t,s)$ and $U(t,s)$. In $N$-body quantum scattering theory, we have to consider the cluster decomposition $a=\{C_1,\ldots,C_m\}$ with $m\geqslant2$, where $C_j\subset\{1,\ldots,N\}$ for $1\leqslant j\leqslant m$, which satisfies $\cup_{j=1}^mC_j=\{1,\ldots,N\}$ and $C_j\cap C_k=\emptyset$ if $j\not=k$. Let $\mathscr{X}^a$ be the subspace of $\mathscr{X}$ given by
\begin{equation}
\mathscr{X}^a=\left\{(r_1,\ldots,r_N)\in\mathscr{X}\Biggm|\sum_{j\in C}m_jr_j=0,\ C\in a\right\}
\end{equation}
and $\mathscr{X}_a$ be its orthogonal space. The cluster Hamiltonian is defined as
\begin{equation}
H_a(t)=-\sum_{j=1}^{N-1}\Delta_{x_j}/2+k(t)\sum_{j=1}^{N-1}x_j^2/2+\sum_{\{j,k\}\subset a}V_{jk}(r_k-r_j)\label{cluster_hamiltonian}
\end{equation}
acting on $L^2(\mathscr{X})$. In the subspace $\mathscr{X}_a$, $V_{jk}$ does not vanish for $\{j,k\}\subset a$. Therefore, it is natural that the cluster interaction
\begin{equation}
\sum_{\{j,k\}\not\subset a}V_{jk}(r_k-r_j)
\end{equation}
be regarded as the perturbation and that the cluster Hamiltonian $H_a(t)$ of \eqref{cluster_hamiltonian} be considered the unperturbed system for any cluster decomposition $a$. However, to discuss inverse scattering, it suffices to consider the $N$-cluster case where $a=\{\{1\},\ldots,\{N\}\}$ and the wave operators defined as
\begin{equation}
W^\pm=\slim_{t\rightarrow\infty}U(t,0)^*U_0(t,0)\label{wave_operator1}.
\end{equation}
We give the proof of existence of \eqref{wave_operator1} in Section \ref{existence}. Then the scattering operator is defined as
\begin{equation}
S(V)=(W^+)^*W^-\label{scattering_operator}
\end{equation}
for $V=\sum_{1\leqslant j<k\leqslant N}V_{jk}(r_k-r_j)$.

\begin{thm}\label{thm1}
Let $V_1=\sum_{1\leqslant j<k\leqslant N}V_{jk,1}$ and $V_2=\sum_{1\leqslant j<k\leqslant N}V_{jk,2}$ satisfy Assumption \ref{ass}. If $S(V_1)=S(V_2)$, then $V_{jk,1}=V_{jk,2}$ for $1\leqslant j<k\leqslant N$.
\end{thm}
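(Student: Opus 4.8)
The plan is to adapt the Enss--Weder time-dependent method to reconstruct each pairwise potential $V_{jk}$ from the high-velocity asymptotics of the scattering operator $S(V)$, and then to deduce $V_{jk,1}=V_{jk,2}$ from the injectivity of the reconstruction map. Throughout I fix one pair $\{j,k\}$ and work in the Jacobi coordinates \eqref{jacobi1}, in which $V_{jk}(r_k-r_j)$ is a function of a single relative variable and the free dynamics \eqref{free2} decouples into a product of one-body time-decaying oscillators.

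First I would record the explicit structure of the free propagator $U_0(t,0)$. Since $H_0(t)$ in \eqref{free2} is a time-dependent harmonic oscillator, $U_0(t,s)$ is given by a Mehler-type formula built from the classical flow $\ddot u+k(t)u=0$; let $c(t),s_\ast(t)$ denote the solutions with $(c,\dot c)(0)=(1,0)$ and $(s_\ast,\dot s_\ast)(0)=(0,1)$, so that for $|t|\geqslant T$ one has the power-law behavior $s_\ast(t)\sim|t|^{1-\lambda}$ dictated by $\lambda=(1-\sqrt{1-4\sigma})/2$. A momentum-boosted wave packet with large boost $v$ then propagates, to leading order, along the classical trajectory $x(t)\approx s_\ast(t)\,v$. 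This is where the hypothesis $\rho>1/(1-\lambda)$ enters: since $\langle x(t)\rangle\sim|t|^{1-\lambda}$, the bound $|V_{jk}^{\rm bdd}(x(t))|\lesssim|t|^{-\rho(1-\lambda)}$ is integrable in $t$ precisely when $\rho(1-\lambda)>1$.

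Next I would set up the reconstruction formula. Writing $S(V)-1$ via Cook's method and the definitions \eqref{wave_operator1}--\eqref{scattering_operator}, I express the matrix element $(i(S(V)-1)\Phi_v,\Psi_v)$ for two boosted wave packets $\Phi_v,\Psi_v$ as a time integral of $U_0(0,t)\,V\,U_0(t,0)$ sandwiched between the unboosted states, plus remainders from the Duhamel expansion of $U(t,0)$ against $U_0(t,0)$. After the boost, the leading term localizes $V$ along the classical trajectory, and a suitable normalization by the velocity factor associated with $s_\ast$ produces, in the limit of large $v$, a weighted Radon transform of the potential integrated along the trajectory lines determined by $v$. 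The bounded parts contribute this main term, while the $-\Delta$-bounded singular parts $V_{lm}^{\rm sing}$ and the higher-order Duhamel remainders vanish in the limit by the Enss--Weder propagation and integrability estimates, exactly as in the two-body case \cite{Is7}.

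The decoupling of the pair $\{j,k\}$ from the remaining potentials is the step I expect to be the main obstacle, and it is handled by a translation argument in the position variables. Translating the whole configuration by $a=(a_1,\dots,a_N)$ shifts each relative argument $r_m-r_l$ by $a_m-a_l$; choosing $a_j=a_k$ but $|a_m-a_l|\to\infty$ for every other pair forces the trajectory lines of all $V_{lm}$ with $\{l,m\}\neq\{j,k\}$ to recede to infinity, so that, by compact support of $V_{lm}^{\rm sing}$ and the decay of $V_{lm}^{\rm bdd}$, only the $\{j,k\}$-term survives. Since $S(V_1)=S(V_2)$ makes all these matrix elements coincide, the reconstructed weighted Radon transforms of $V_{jk,1}$ and $V_{jk,2}$ agree for every boost direction and every transverse displacement; invoking injectivity of the Radon transform then yields $V_{jk,1}=V_{jk,2}$, and running the argument over all pairs gives the theorem. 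The principal technical difficulties are that $U_0(t,s)$ is non-autonomous and not translation-invariant in time, so the propagation estimates must be carried out along the $s_\ast(t)\sim|t|^{1-\lambda}$ flow rather than along free flight $x+tv$, and that the error and singular terms must be controlled uniformly in both the boost $v$ and the translation $a$.
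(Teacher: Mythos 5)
Your overall Enss--Weder skeleton (Duhamel expression for $S(V)-1$, high-velocity localization along the classical flow $s_\ast(t)\sim|t|^{1-\lambda}$, integrability from $\rho(1-\lambda)>1$, Radon-transform injectivity) matches the paper. The genuine gap is in the step you yourself single out as the main obstacle: the decoupling of the pair $\{j,k\}$ from the other potentials. You propose to achieve it by translating the configuration, sending $|a_m-a_l|\to\infty$ for all pairs other than $\{j,k\}$. That mechanism does not work for this model, because the free dynamics is \emph{not} translation-invariant: for $|t|<T$ the generator is the genuine harmonic oscillator $D^2/2+\omega^2x^2/2$, and conjugating by $e^{-{\rm i}a\cdot D}$ produces $D^2/2+\omega^2(x+a)^2/2$, not $H_0$. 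Concretely, the Mehler factorization \eqref{mehler2} contains the dilation $\mathscr{D}_{d(N-1)}(\cos\omega t)$, and with the paper's normalization $\pi/(2\omega)\leqslant T$ the time $t=\pi/(2\omega)$ lies inside the harmonic regime; at that instant the position density of the evolved state is (up to phases) the modulus of the Fourier transform of the initial data and is therefore completely independent of the spatial translation $a$ --- translated packets refocus rather than receding to infinity. Even away from that instant, every quantitative propagation estimate in the proof is paid for by weighted norms of the initial state, and for the translated data these become $\|\langle x+a\rangle^{N}\Phi_0\|\sim|a|^{N}$, so your error terms are not uniform in $a$; you flag this uniformity as a difficulty but do not resolve it, and it is not a routine verification.

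The paper avoids translations entirely and instead uses the hierarchical velocity scaling of Enss--Weder and Valencia--Weder: in Jacobi coordinates adapted to the distinguished pair (note: \emph{not} the chain coordinates \eqref{jacobi1}, but $y_{12}=r_2-r_1$ and $y_j=r_j-(m_1r_1+m_2r_2)/(m_1+m_2)$), the boost gives the relative coordinate $x_{12}$ velocity $v$ and gives each $x_j$, $3\leqslant j\leqslant N$, velocity $v_j=|v|^2e_j$ in mutually distinct directions. Then every relative velocity $v_{lm}$ with $\{l,m\}\neq\{1,2\}$ is of order $|v|^2$, so the two-body estimates of Lemmas \ref{lem7} and \ref{lem8} give contributions $O(|v_{lm}|^{-1})=O(|v|^{-2})$, which still vanish after multiplication by the normalizing factor $|v|$. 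This requires no new uniformity beyond the two-body case and is the step your proposal is missing; without it (or a genuinely repaired substitute), the reconstruction formula of Theorem \ref{thm5} would pick up an $O(1)$ Radon-transform contribution from every pair, not just from $V_{12}$.
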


Many researchers have studied scattering theory under the $N$-body quantum systems, in particular, the the existence and asymptotic completeness of the wave operators, has been studied. Regarding the standard $N$-body Schr\"odinger operators, see \cite{AdItItSk, De1, De2, Gr, SiSo1, SiSo2, Sk}; the detailed history and proofs are also provided in \cite{DeGe, Iso}. For $N$-body Schr\"odinger operators in the external electric fields, see \cite{Ad1, Ad3, AdTa1, AdTa2, HeMoSk}. For the external magnetic fields, see \cite{Ad2, GeLa}. However, there are no results regarding scattering theory for $N$-body time-decaying harmonic oscillators.

The Enss-Weder time-dependent method was invented in \cite{EnWe} for the standard Sch\"odinger operators. Since then, many researchers have applied this method to other quantum models. The Stark effect and time-dependent electric fields were studied by \cite{AdFuIs, AdKaKaTo, AdMa, AdTs1, AdTs2, Is3, Is6, Ni1, Ni3, VaWe}. The repulsive Hamiltonians were studied by \cite{Is1, Is5, Ni3}. The fractional Laplacian and Dirac equations were studied by \cite{Is4, Ju}. The non-linear Schr\"odinger equations were studied by \cite{Wa1}. The Hartree-Fock equations were studied by \cite{Wa2, Wa3}. Recently, the author investigated the two-body time-decaying harmonic oscillator and repulsive Hamiltonian in \cite{Is7, Is8}. $N$-body inverse scattering was investigated by \cite{EnWe} for the standard Schr\"odinger case, and by \cite{We, VaWe} for the Stark effect. Our proof also follows the strategies of these previous studies. That is, by an appropriate choice of Jacobi coordinates, the problem is reduced to the two-body case \cite{Is7}. It should be emphasized that the pioneering works of \cite{EnWe, We} already treated both the two-body and $N$-body case in a unified manner. The arguments of \cite{EnWe, We, VaWe} show that the reduction to two-body channels is a fundamental principle.

Regarding scattering theory for the two-body time-decaying harmonic oscillator, \cite{IsKa1} proved the existence of the wave operators under the assumption \ref{ass} only for the bounded part $V^{\rm bdd}$ and non-existence for $\rho\leqslant1/(1-\lambda)$, and clarified that the condition $\rho>1/(1-\lambda)$ is short-range. We can intuitively understand the threshold $1/(1-\lambda)$ from the classical motion of the particle $x(t)=c_1t^{1-\lambda}+c_2t^\lambda$ for $t\geqslant T$, which satisfies the Newton equation $({\rm d}^2/{\rm d}t^2)x(t)=-k(t)x(t)$ because $\lambda$ is one of the roots of the quadratic equation $\lambda^2-\lambda+\sigma=0$. The critical case $\sigma=1/4$ was studied by \cite{IsKa2}. For the inverse square potential, \cite{IsKa3} proved the asymptotic completeness of the wave operators. Moreover, \cite{IsKa3} proved the Strichartz estimates and applied them to the initial-value problems for the non-linear Schr\"odinger equations. The non-linear Schr\"odinger equations and Strichartz estimates with time-decaying harmonic oscillators have also been studied by \cite{Ka1, Ka2, Ka3, KaMi1, KaMi2, KaMu, KaSa, KaYo}.

In contrast to the time-independent harmonic case or standard free Schr\"odinger dynamics, the time-decaying quadratic potential induces a nontrivial scaling behavior in the classical trajectories and the quantum propagator. One of the key technical ingredients of this paper is a propagation estimate for the free dynamics associated with the time-dependent harmonic oscillator in the $N$-body framework (Proposition \ref{prop3}). Although the estimate itself can be reduced to the two-body case by an appropriate choice of Jacobi coordinates, its formulation in a form suitable for the $N$-body setting is essential for proving the existence of the $N$-cluster wave operators. Once the existence of the wave operators is established, the inverse scattering argument follows the Enss-Weder time-dependent method and reduces the problem to suitable two-body channels. While the inverse problem itself is treated by adapting known high-velocity arguments, the derivation of the propagation estimates (Lemmas \ref{lem7} and \ref{lem8}) in the presence of a time-decaying harmonic potential and their extension from the two-body case studied in \cite{Is7} to the $N$-body setting constitute the main technical novelty of the present work.

Throughout this paper, we identify $L^2(\mathscr{X})$ with $L^2(\mathbb{R}^{d(N-1)})$ and use the following notations. $\|\cdot\|$ denotes the $L^2$-norm or operator norm on $L^2(\mathbb{R}^{d(N-1)})$, and $(\cdot,\cdot)$ denotes the scalar product of $L^2(\mathbb{R}^{d(N-1)})$. $F(\cdots)$ is the characteristic function of the set $\{\cdots\}$.

\section{Existence of Wave Operators}\label{existence}
In this section, we prove the existence of the $N$-cluster wave operators \eqref{wave_operator1}. To do that, we reduce the problem to a much simpler form of strong limits. We define $D=(D_1,\ldots,D_{N-1})$ for $D_j=-{\rm i}\nabla_{x_j}$ and $x=(x_1,\ldots,x_{N-1})\in\mathbb{R}^{d(N-1)}$. Then the free Hamiltonian is written as
\begin{equation}
H_0(t)=D^2/2+k(t)x^2/2=\sum_{j=1}^{N-1}D_j^2/2+k(t)\sum_{j=1}^{N-1}x_j^2/2.\label{free3}
\end{equation}
The propagators $U_0(t,s)$ and $U(t,s)$ have unitary factorizations for $t,s\geqslant T$ or $t,s\leqslant-T$ that were proved by \cite[Proposition 1]{IsKa1} in the two-body case. We define
\begin{equation}
U_{0,\lambda}(t)=e^{{\rm i}\lambda x^2/(2t)}e^{-{\rm i}\lambda\log tA}e^{-{\rm i}t^{1-2\lambda}D^2/(2(1-2\lambda))}\label{factorization+}
\end{equation}
if $t\geqslant T$ and
\begin{equation}
U_{0,\lambda}(t)=e^{{\rm i}\lambda x^2/(2t)}e^{-{\rm i}\lambda\log(-t)A}e^{{\rm i}(-t)^{1-2\lambda}D^2/(2(1-2\lambda))}\label{factorization-}
\end{equation}
if $t\leqslant-T$, where $A=(D\cdot x+x\cdot D)/2$. Then the factorizations
\begin{equation}
U_0(t,s)=U_{0,\lambda}(t)U_{0,\lambda}(s)^*\label{factorization_free}
\end{equation}
and 
\begin{equation}
U(t,s)=e^{{\rm i}\lambda x^2/(2t)}e^{-{\rm i}\lambda\log|t|A}U_\lambda(t,s)e^{{\rm i}\lambda\log|s|A}e^{-{\rm i}\lambda x^2/(2s)}\label{factorization_full}
\end{equation}
hold for $t,s\geqslant T$ or $t,s\leqslant-T$, where $U_\lambda(t,s)$ is the propagator generated by
\begin{equation}
D^2/(2|t|^{2\lambda})+\sum_{1\leqslant j<k \leqslant N}V_{jk}(|t|^\lambda(r_k-r_j)).
\end{equation}
Regarding the proofs of \eqref{factorization_free} and \eqref{factorization_full}, it suffices to prove the following Proposition \ref{prop2} because $e^{{\rm i}\lambda\log|t|A}D_je^{-{\rm i}\lambda\log|t|A}=D_j/|t|^\lambda$ for $1\leqslant j\leqslant N$ clearly holds, and the rest of the proofs are demonstrated in \cite[Proposition 1]{IsKa1}.

\begin{prop}\label{prop2}
\begin{equation}
e^{{\rm i}\lambda\log|t|A}(r_k-r_j)e^{-{\rm i}\lambda\log|t|A}=|t|^\lambda(r_k-r_j)\label{prop2_1}
\end{equation}
holds for any $1\leqslant j<k\leqslant N$.
\end{prop}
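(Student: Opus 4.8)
The plan is to read $A=(D\cdot x+x\cdot D)/2$ as the generator of dilations and to exploit the fact that $r_k-r_j$ is a \emph{homogeneous linear} function of the internal coordinates. The computation then mirrors the momentum identity already quoted in the text, $e^{{\rm i}\lambda\log|t|A}D_je^{-{\rm i}\lambda\log|t|A}=D_j/|t|^\lambda$, but with the opposite scaling exponent.

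First I would record the infinitesimal action of $A$ on the position operators. Writing $x=(x_1,\dots,x_{N-1})$ with each $x_j\in\mathbb{R}^d$ and using $[D_{j,\alpha},x_{j,\alpha}]=-{\rm i}$ componentwise, a direct commutator computation gives $[A,x_{j,\alpha}]=-{\rm i}x_{j,\alpha}$ (whereas $[A,D_{j,\alpha}]=+{\rm i}D_{j,\alpha}$, which is why the two exponents come out opposite). Setting $f(s)=e^{{\rm i}sA}x_{j,\alpha}e^{-{\rm i}sA}$ and differentiating yields the Heisenberg-type equation $f'(s)={\rm i}\,e^{{\rm i}sA}[A,x_{j,\alpha}]e^{-{\rm i}sA}=f(s)$, so $f(s)=e^{s}x_{j,\alpha}$. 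Choosing $s=\lambda\log|t|$ gives
\begin{equation}
e^{{\rm i}\lambda\log|t|A}x_je^{-{\rm i}\lambda\log|t|A}=|t|^\lambda x_j
\end{equation}
for every $1\leqslant j\leqslant N-1$, the scaling dual to the quoted momentum relation.

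Next I would express $r_k-r_j$ in these coordinates. Since $r_k-r_j$ is invariant under the overall translation $r\mapsto r+(a,\dots,a)$, its restriction to $\mathscr{X}$ (where the center of mass has been removed) depends only on the internal degrees of freedom. Inverting the Jacobi change of variables \eqref{jacobi1}, each $r_i$ on $\mathscr{X}$ is a homogeneous linear combination of $y_1,\dots,y_{N-1}$, hence of $x_1,\dots,x_{N-1}$, so that $r_k-r_j=\sum_{l=1}^{N-1}c_l\,x_l$ for suitable real constants $c_l$ with no constant term. In particular $r_k-r_j$ is homogeneous of degree one in $x$.

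Finally I would conclude by linearity: conjugation by $e^{{\rm i}\lambda\log|t|A}$ is linear in the operator and multiplies each $x_l$ by $|t|^\lambda$, so it multiplies the homogeneous combination $\sum_l c_l x_l=r_k-r_j$ by the common factor $|t|^\lambda$, which is \eqref{prop2_1}. The only point requiring care is the bookkeeping of the coordinate change: one must check that \eqref{jacobi1} expresses $r_k-r_j$ with no constant term and with all contributions lying in the span of the $x_l$, i.e.\ that $r_k-r_j\in\mathscr{X}$ as already noted after \eqref{perturbed2}. This is elementary once the Jacobi transformation is inverted, and is the main (though routine) obstacle.
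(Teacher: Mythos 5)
Your proof is correct, but it takes a different route from the paper's. The paper fixes the pair $(j,k)$ and \emph{re-chooses} the Jacobi coordinates so that $y_1=r_k-r_j$ becomes the first internal coordinate; then $A$ splits as a sum of one-block dilation generators, only $A_1=(x_1\cdot D_1+D_1\cdot x_1)/2$ acts nontrivially on $x_1$, and the scaling identity $e^{{\rm i}\lambda\log|t|A_1}x_1e^{-{\rm i}\lambda\log|t|A_1}=|t|^\lambda x_1$ is imported from the two-body reference \cite[Proposition 1]{IsKa1}. You instead keep a single fixed coordinate system, derive the scaling of every coordinate block from the commutator $[A,x_{j,\alpha}]=-{\rm i}x_{j,\alpha}$ and the resulting Heisenberg equation $f'(s)=f(s)$, and then use that $r_k-r_j$ restricted to $\mathscr{X}$ is a homogeneous degree-one linear combination $\sum_l c_l x_l$ (which the paper itself implicitly records via the surjective matrices $L_{jk}$ and the remark that $r_k-r_j\in\mathscr{X}$), so conjugation multiplies it by the common factor $|t|^\lambda$. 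Your sign bookkeeping is right and consistent with the quoted dual identity $e^{{\rm i}\lambda\log|t|A}D_je^{-{\rm i}\lambda\log|t|A}=D_j/|t|^\lambda$. What your approach buys is self-containedness: you neither rely on the external two-body computation nor on the (unstated but needed) fact that $A$ is invariant under the orthogonal change between different mass-weighted Jacobi frames. What the paper's approach buys is brevity and uniformity with the rest of the argument, where re-adapting the Jacobi coordinates to a chosen pair is the recurring device. The one point you flag as needing care --- that $r_k-r_j$ has no constant term and lies in the span of the $x_l$ --- is indeed elementary and is exactly the translation-invariance of $r_k-r_j$ combined with the completeness of the internal coordinates on $\mathscr{X}$.
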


\begin{proof}[Proof of Proposition \ref{prop2}]
For $1\leqslant j<k\leqslant N$ fixed, we construct the Jacobi coordinates from $y_1=r_k-r_j$ and define $A_1=(x_1\cdot D_1+D_1\cdot x_1)/2$ for $x_1=\sqrt{\mu_1}y_1$. We have
\begin{equation}
e^{{\rm i}\lambda\log|t|A}y_1e^{-{\rm i}\lambda\log|t|A}=e^{{\rm i}\lambda\log|t|A_1}x_1e^{-{\rm i}\lambda\log|t|A_1}/\sqrt{\mu_1}.
\end{equation}
As in the proof of \cite[Proposition 1]{IsKa1}, we have \eqref{prop2_1}.
\end{proof}

To prove the existence of the wave operators \eqref{wave_operator1}, it suffices to prove the existence of the strong limits
\begin{equation}
\slim_{t\rightarrow\pm\infty}U_{\lambda}(t,T)^*e^{-{\rm i}t^{1-2\lambda}D^2/(2(1-2\lambda))}\label{wave_operator2}
\end{equation}
by virtue of the factorizations \eqref{factorization_free} and \eqref{factorization_full}, and the chain rules of the propagators. The following propagation estimate and the existence of the wave operators for the two-body time-decaying harmonic oscillator with only the bounded part $V^{\rm bdd}$ only were proved in \cite[Proposition 1 and Theorem 1]{IsKa1} by establishing the correspondence with \eqref{wave_operator2} for the two-body case. Our proof is one of the extensions to the $N$-body case with the singular part $V_{jk}^{\rm sing}$.

Here we introduce a conical region in which all relative momenta are uniformly bounded away from zero. We denote a $d(N-1)$-dimensional open ball with radius $r>0$ centered at $\xi\in\mathbb{R}^{d(N-1)}$ as $B_{d(N-1)}(\xi,r)$ and its spherical surface as $\mathbb{S}^{d(N-1)-1}$. For any $1\leqslant j<k\leqslant N$, there exists a full-rank matrix $L_{jk}:\mathbb{R}^{d(N-1)}\rightarrow\mathbb{R}^d$ such that any conjugate momentum $\xi_{jk}\in\mathbb{R}^d$ associated with the relative coordinate $r_k-r_j$ is given by $\xi_{jk}=L_{jk}\xi$ for some $\xi\in\mathbb{R}^{d(N-1)}$. We note that $L_{jk}$ is surjective and that $\dim\ker L_{jk}=d(N-2)$. Let $\omega_0\in\mathbb{S}^{d(N-1)-1}\setminus(\cup_{\leqslant j<k\leqslant N}\ker L_{jk})$ be fixed. Then $|L_{jk}\omega_0|>0$ holds. In particular, there exist $\epsilon_0>0$ and $\delta>0$ such that $|L_{jk}\omega|\geqslant\epsilon_0$ for any $\omega\in\mathbb{S}^{d(N-1)-1}\cap B_{d(N-1)}(\omega_0,\delta)$. We then define a conical region
\begin{equation}
\mathscr{C}_{\omega_0,\delta}=\{\xi\in\mathbb{R}^{d(N-1)}\setminus\{0\}\bigm|\xi/|\xi|\in\mathbb{S}^{d(N-1)-1}\cap B_{d(N-1)}(\omega_0,\delta)\}.
\end{equation}
By this definition, if $\xi\in\mathscr{C}_{\omega_0,\delta}$, then $|\xi_{jk}|=|L_{jk}\xi|\geqslant\epsilon_0|\xi|$ holds for any $1\leqslant j<k\leqslant N$. In the following Proposition \ref{prop3} and proof of Proposition \ref{prop4}, we use $\omega_0\in\mathbb{S}^{d(N-1)-1}\setminus(\cup_{\leqslant j<k\leqslant N}\ker L_{jk})$, $\epsilon_0>0$, and $\delta>0$ chosen above.

\begin{prop}\label{prop3}
Let $\phi\in\mathscr{S}(\mathbb{R}^{d(N-1)})$ be such that $\mathscr{F}_{d(N-1)}\phi\in C_0^\infty(\mathbb{R}^{d(N-1)})$ with $\supp\mathscr{F}_{d(N-1)}\phi\subset\{\xi\in\mathscr{C}_{\omega_0,\delta}\bigm||\xi|\geqslant\epsilon\}$ for $0<\epsilon\leqslant\epsilon_0$, where $\mathscr{F}_{d(N-1)}$ denotes the Fourier transform on $L^2(\mathbb{R}^{d(N-1)})$. Then
\begin{gather}
\|F(\sqrt{\mu_{jk}}|r_k-r_j|\leqslant\epsilon^2t^{1-2\lambda}/(1-2\lambda))e^{-{\rm i}t^{1-2\lambda}D^2/(2(1-2\lambda))}\phi\|\nonumber\\
\lesssim_\nu t^{(d/2-\nu)(1-2\lambda)}\|\langle r_k-r_j\rangle^\nu\phi\|\label{prop3_1}
\end{gather}
holds for any $\nu\in\mathbb{N}$, $t\geqslant T$ and $1\leqslant j<k\leqslant N$, where $\mu_{jk}=m_jm_k/(m_j+m_k)$ and $\lesssim_\nu$ means that the constant depends on $\nu$.
\end{prop}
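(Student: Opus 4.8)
The plan is to reduce the $N$-body propagation estimate \eqref{prop3_1} to the two-body minimal-velocity estimate of \cite[Proposition 1]{IsKa1} by passing to the Jacobi coordinates adapted to the fixed pair $(j,k)$, exactly as in the proof of Proposition \ref{prop2}. Set $s=t^{1-2\lambda}/(1-2\lambda)>0$ for $t\geqslant T$, so that the propagator in \eqref{prop3_1} is $e^{-{\rm i}sD^2/2}$ and the cutoff region becomes $\{|x_1|\leqslant\epsilon^2 s\}$, where $x_1=\sqrt{\mu_{jk}}(r_k-r_j)$ is the first pair-adapted Jacobi coordinate (here $\mu_1=\mu_{jk}$ as in \eqref{jacobi1}). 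Because the Jacobi coordinates are orthogonal for the scalar product \eqref{scalar_product}, the operator $H_0(t)$ keeps the form \eqref{free3} in the new frame; in particular $D^2=D_1^2+\sum_{l\geqslant2}D_l^2$ splits and the free evolution factorizes as $e^{-{\rm i}sD^2/2}=e^{-{\rm i}sD_1^2/2}\otimes e^{-{\rm i}s\sum_{l\geqslant2}D_l^2/2}$ on $L^2_{x_1}\otimes L^2_{x_2,\ldots,x_{N-1}}$.

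First I would discard the inert variables. Since $F(|x_1|\leqslant\epsilon^2 s)$ depends only on $x_1$, the factorization gives
\[F(|x_1|\leqslant\epsilon^2 s)e^{-{\rm i}sD^2/2}=\big(F(|x_1|\leqslant\epsilon^2 s)e^{-{\rm i}sD_1^2/2}\otimes1\big)\big(1\otimes e^{-{\rm i}s\sum_{l\geqslant2}D_l^2/2}\big),\]
and the second factor is unitary. Hence $\|F(|x_1|\leqslant\epsilon^2 s)e^{-{\rm i}sD^2/2}\phi\|=\|(F(|x_1|\leqslant\epsilon^2 s)e^{-{\rm i}sD_1^2/2}\otimes1)\phi'\|$ with $\phi'=(1\otimes e^{-{\rm i}s\sum_{l\geqslant2}D_l^2/2})\phi$. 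As a Fourier multiplier acting only on $x_2,\ldots,x_{N-1}$, this unitary preserves the Fourier support of $\phi$ and commutes with multiplication by functions of $x_1$, so $\supp\mathscr{F}_{d(N-1)}\phi'=\supp\mathscr{F}_{d(N-1)}\phi$ and $\|\langle x_1\rangle^\nu\phi'\|=\|\langle x_1\rangle^\nu\phi\|$. Thus the estimate reduces to one for a single relative coordinate $x_1\in\mathbb{R}^d$.

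Next I would transport the Fourier support hypothesis to this coordinate. On $\supp\mathscr{F}_{d(N-1)}\phi\subset\mathscr{C}_{\omega_0,\delta}\cap\{|\xi|\geqslant\epsilon\}$ the cone geometry gives $|\xi_{jk}|=|L_{jk}\xi|\geqslant\epsilon_0|\xi|\geqslant\epsilon_0\epsilon$; since the physical relative momentum is frame independent and the momentum conjugate to $x_1=\sqrt{\mu_{jk}}(r_k-r_j)$ equals $\xi_{jk}/\sqrt{\mu_{jk}}$, the reduced state has first momentum bounded below by $c\epsilon$ with $c=\epsilon_0/\sqrt{\mu_{jk}}$. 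This is precisely the input for the two-body estimate of \cite[Proposition 1]{IsKa1}, whose proof is a non-stationary phase argument: writing $(e^{-{\rm i}sD_1^2/2}g)(x_1)=(2\pi)^{-d/2}\int e^{{\rm i}\Phi}(\mathscr{F}_d g)(\eta)\,{\rm d}\eta$ with phase $\Phi=x_1\cdot\eta-s\eta^2/2$, on the overlap $\{|x_1|\leqslant\epsilon^2 s\}\cap\{|\eta|\geqslant c\epsilon\}$ the gradient $\nabla_\eta\Phi=x_1-s\eta$ obeys $|\nabla_\eta\Phi|\geqslant s|\eta|-|x_1|\geqslant(c\epsilon-\epsilon^2)s$, a positive multiple of $s$ once $\epsilon<c$ (the velocity gap opened by the $\epsilon^2$ in the cutoff). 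Repeated integration by parts with $L=|\nabla_\eta\Phi|^{-2}\,\nabla_\eta\Phi\cdot(-{\rm i}\nabla_\eta)$, which satisfies $Le^{{\rm i}\Phi}=e^{{\rm i}\Phi}$, then gains a factor $(\epsilon s)^{-1}$ per step while each $\eta$-derivative falling on $\mathscr{F}_d g$ produces a factor of $x_1$; after $\nu$ steps and Plancherel this yields $\lesssim_\nu s^{-\nu}\|\langle x_1\rangle^\nu g\|$. Applying this with $g=\phi'$, converting $s^{-\nu}=(1-2\lambda)^\nu t^{-\nu(1-2\lambda)}$, and absorbing $\sqrt{\mu_{jk}}$ into the constant via $\|\langle x_1\rangle^\nu\phi'\|\lesssim_\nu\|\langle r_k-r_j\rangle^\nu\phi\|$ gives \eqref{prop3_1}.

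The main obstacle is not conceptual but lies in the uniformity of the velocity gap and the bookkeeping of the weight. One must check that $|x_1-s\eta|\gtrsim\epsilon s$ holds uniformly over the \emph{whole} cutoff region (which is where the restriction $0<\epsilon\leqslant\epsilon_0$ and the choice of the cone $\mathscr{C}_{\omega_0,\delta}$ enter, guaranteeing $\epsilon<c$), and that the integration by parts indeed produces the weight $\langle r_k-r_j\rangle^\nu$ in the $L^2$ sense after the carried-along variables are integrated out. Since, after the coordinate reduction and the tensor/unitarity argument, the surviving estimate in $x_1$ is identical to the two-body propagation estimate of \cite[Proposition 1]{IsKa1}, the genuinely new content here is only the reduction to a single pair, and the phase analysis itself may be quoted from there.
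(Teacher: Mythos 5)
Your proposal is correct and follows essentially the same route as the paper: reduction to the pair-adapted Jacobi coordinate $x_1=\sqrt{\mu_{jk}}(r_k-r_j)$ by factoring out the (unitary) free evolution in the remaining variables, transport of the momentum lower bound $|L_{jk}\xi|\geqslant\epsilon_0|\xi|$ from the cone $\mathscr{C}_{\omega_0,\delta}$, and then the non-stationary phase/integration-by-parts argument applied $\nu$ times to produce $t^{-\nu(1-2\lambda)}$ together with the weight $\langle r_k-r_j\rangle^\nu$. The only cosmetic difference is that you quantify the velocity gap as $(c\epsilon-\epsilon^2)s$ whereas the paper simply records $|x_1-t^{1-2\lambda}\xi/(1-2\lambda)|\geqslant\epsilon^2t^{1-2\lambda}/(1-2\lambda)$ on the relevant region; both are the same estimate.
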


\begin{proof}[Proof of Proposition \ref{prop3}]
We have
\begin{gather}
F(\sqrt{\mu_{jk}}|r_k-r_j|\leqslant\epsilon^2t^{1-2\lambda}/(1-2\lambda))e^{-{\rm i}t^{1-2\lambda}D^2/(2(1-2\lambda))}\phi\nonumber\\
=e^{-{\rm i}t^{1-2\lambda}\sum_{j=2}^{N-1}D_j^2/(2(1-2\lambda))}F(|x_1|\leqslant\epsilon^2t^{1-2\lambda}/(1-2\lambda))e^{-{\rm i}t^{1-2\lambda}D_1^2/(2(1-2\lambda))}\phi,
\end{gather}
where we reconstruct the Jacobi coordinates by relabeling them so that the relative coordinate $y_1 = r_k - r_j$ appears as the first Jacobi component for $(j,k)\neq(1,2)$. Because $|\xi_1|=|L_{jlk}\xi|\geqslant\epsilon_0|\xi|\geqslant\epsilon^2$ holds on $\supp\mathscr{F}_{d(N-1)}\phi$, it suffices to prove that
\begin{gather}
\|F(|x_1|\leqslant\epsilon^2t^{1-2\lambda}/(1-2\lambda))e^{-{\rm i}t^{1-2\lambda}D_1^2/(2(1-2\lambda))}\psi\|_{L^2(\mathbb{R}^d)}\nonumber\\
\lesssim_\nu t^{(d/2-\nu)(1-2\lambda)}\|\langle x_1\rangle^\nu\psi\|_{L^2(\mathbb{R}^d)}\label{prop3_2}
\end{gather}
for $\mathscr{F}_d\psi\in C_0^\infty(\mathbb{R}^d)$ with $\supp\mathscr{F}_d\psi\subset\{\xi\in\mathbb{R}^d\bigm||\xi|\geqslant\epsilon^2\}$ where $\mathscr{F}_d$ denotes the Fourier transform on $L^2(\mathbb{R}^d)$.
We write
\begin{gather}
F(|x_1|\leqslant\epsilon^2t^{1-2\lambda}/(1-2\lambda))e^{-{\rm i}t^{1-2\lambda}D_1^2/(2(1-2\lambda))}\psi\nonumber\\
=\int_{\mathbb{R}^d}e^{{\rm i}(x_1\cdot\xi-t^{1-2\lambda}|\xi|^2/(2(1-2\lambda)))}F(|x_1|\leqslant\epsilon^2t^{1-2\lambda}/(1-2\lambda))\mathscr{F}_d\psi(\xi){\rm d}\xi/(2\pi)^{d/2}.
\end{gather}
When $|x_1|\leqslant\epsilon^2t^{1-2\lambda}/(1-2\lambda)$, we have
\begin{equation}
|x_1-t^{1-2\lambda}\xi/(1-2\lambda)|\geqslant\epsilon^2t^{1-2\lambda}/(1-2\lambda)
\end{equation}
on $\supp\mathscr{F}_d\psi$. By the relation
\begin{gather}
e^{{\rm i}(x_1\cdot\xi-t^{1-2\lambda}\xi^2/(2(1-2\lambda)))}\nonumber\\
=(x_1-t^{1-2\lambda}\xi/(1-2\lambda))\cdot(-{\rm i}\nabla_\xi)e^{{\rm i}(x_1\cdot\xi-t^{1-2\lambda}\xi^2/(2(1-2\lambda)))}/|x_1-t^{1-2\lambda}\xi/(1-2\lambda)|^2
\end{gather}
and integrating by parts, we have
\begin{gather}
|F(|x_1|\leqslant\epsilon^2t^{1-2\lambda}/(1-2\lambda))e^{-{\rm i}t^{1-2\lambda}D_1^2/(2(1-2\lambda))}\psi(x_1)|\nonumber\\
\lesssim t^{-\nu(1-2\lambda)}F(|x_1|\leqslant\epsilon^2t^{1-2\lambda}/(1-2\lambda))\|\langle x\rangle^\nu\psi\|_{L^2(\mathbb{R}_x^d)}.\label{prop3_3}
\end{gather}
Integrating \eqref{prop3_3} with respect to $x_1$ yields \eqref{prop3_2}.
\end{proof}

\begin{prop}\label{prop4}
The wave operators \eqref{wave_operator1} exist.
\end{prop}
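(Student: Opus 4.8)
The plan is to establish the existence of the strong limits in \eqref{wave_operator2}, since by the factorizations \eqref{factorization_free}, \eqref{factorization_full} and the chain rule for propagators, the existence of $W^\pm$ reduces to these. I would use Cook's method. For a dense set of $\phi$ it suffices to show that
\begin{equation}
\frac{{\rm d}}{{\rm d}t}U_\lambda(t,T)^*e^{-{\rm i}t^{1-2\lambda}D^2/(2(1-2\lambda))}\phi
\end{equation}
is integrable in $t$ near $\pm\infty$. Writing $G_\lambda(t)=D^2/(2|t|^{2\lambda})+\sum_{j<k}V_{jk}(|t|^\lambda(r_k-r_j))$ for the generator of $U_\lambda(t,T)$, and noting that $e^{-{\rm i}t^{1-2\lambda}D^2/(2(1-2\lambda))}$ is generated by $-D^2/(2t^{2\lambda})$ for $t\geqslant T$, the free part cancels and the derivative equals
\begin{equation}
-{\rm i}U_\lambda(t,T)^*\Bigl(\sum_{1\leqslant j<k\leqslant N}V_{jk}(|t|^\lambda(r_k-r_j))\Bigr)e^{-{\rm i}t^{1-2\lambda}D^2/(2(1-2\lambda))}\phi.
\end{equation}
Since $U_\lambda(t,T)$ is unitary, integrability of the norm of this expression is what I must prove, term by term in the sum over pairs $(j,k)$.

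The dense set I would take consists of finite linear combinations of states $\phi$ as in Proposition \ref{prop3}, i.e. with $\mathscr{F}_{d(N-1)}\phi\in C_0^\infty$ supported in a cone $\mathscr{C}_{\omega_0,\delta}$ away from the origin; such states are dense because the excluded set $\cup_{j<k}\ker L_{jk}$ is a finite union of lower-dimensional subspaces, so the admissible cones sweep out a dense subset of frequency space. For each fixed pair I split the potential into its bounded and singular parts and insert a partition of unity $1=F(\sqrt{\mu_{jk}}|r_k-r_j|\leqslant \epsilon^2 t^{1-2\lambda}/(1-2\lambda))+F(\sqrt{\mu_{jk}}|r_k-r_j|>\epsilon^2 t^{1-2\lambda}/(1-2\lambda))$. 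On the inner region the characteristic-function factor is exactly the one controlled by the propagation estimate \eqref{prop3_1}, which for $\nu$ large gives decay $t^{-\nu(1-2\lambda)}$; multiplied by the uniform bound on $V_{jk}^{\rm bdd}$ (respectively by the $L^q$-bound on $V_{jk}^{\rm sing}$, which is $-\Delta$-bounded hence controllable on the Schwartz state) this is integrable. On the outer region the scaled argument $|t|^\lambda(r_k-r_j)$ has magnitude $\gtrsim |t|^\lambda\cdot t^{1-2\lambda}=t^{1-\lambda}$, so the bounded part obeys
\begin{equation}
|V_{jk}^{\rm bdd}(|t|^\lambda(r_k-r_j))|\lesssim \langle t^{1-\lambda}\rangle^{-\rho}\lesssim t^{-(1-\lambda)\rho},
\end{equation}
and the hypothesis $\rho>1/(1-\lambda)$ makes the exponent exceed $1$, giving integrability. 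The singular part $V_{jk}^{\rm sing}$ is compactly supported, so for $t$ large its scaled argument leaves the support entirely and the outer contribution vanishes identically for $t\gg1$.

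The main obstacle I anticipate is handling the singular part rigorously: $V_{jk}^{\rm sing}\in L^q$ is only $-\Delta$-bounded, not bounded, so on the inner region I cannot simply pull out a supremum. The clean way is to interpolate the propagation estimate with a dispersive/Sobolev bound — writing $V_{jk}^{\rm sing}\,e^{-{\rm i}t^{1-2\lambda}D^2/(2(1-2\lambda))}\phi$ and controlling it via Hölder and the $L^2$–$L^\infty$ or $L^2$–$L^{q'}$ mapping properties of the free-type propagator generated by $D^2/(2t^{2\lambda})$, against which the localization factor still produces a power of $t$ to spare. Once each pair's bounded and singular contributions are shown integrable on both regions, summing over the finitely many pairs $(j,k)$ and over the dense set yields the existence of the strong limits \eqref{wave_operator2}, and hence of $W^\pm$. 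The two-body analysis of \cite{IsKa1} supplies the template for these estimates, and the only genuinely new bookkeeping is the reconstruction of Jacobi coordinates adapted to each pair, already established in Propositions \ref{prop2} and \ref{prop3}.
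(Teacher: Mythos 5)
Your proposal follows the same architecture as the paper's proof: Cook's method applied to the reduced limits \eqref{wave_operator2}, a dense set of states with compactly supported Fourier transform in a cone $\mathscr{C}_{\omega_0,\delta}$ avoiding $\cup_{j<k}\ker L_{jk}$ (your explicit density remark is a welcome addition the paper leaves implicit), the inner/outer splitting by $F(\sqrt{\mu_{jk}}|r_k-r_j|\lessgtr\epsilon^2t^{1-2\lambda}/(1-2\lambda))$, the bound $t^{-\rho(1-\lambda)}$ for $V_{jk}^{\rm bdd}$ on the outer region, the identical vanishing of the outer singular contribution for $t\gg1$ by compact support, and Proposition \ref{prop3} for the inner region. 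All of these steps are correct and match the paper.

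The one point of divergence is the only genuinely delicate term, $V_{jk}^{\rm sing}$ on the classically forbidden region, which you correctly identify as the main obstacle but leave as a sketch. The paper's mechanism avoids $L^p$ machinery entirely: it inserts a scaled resolvent via the exact scaling identity $\|V_{jk}^{\rm sing}(t^\lambda x_1/\sqrt{\mu_{jk}})\langle\sqrt{\mu_{jk}}D_1/t^\lambda\rangle^{-2}\|=\|V_{jk}^{\rm sing}(x_1)\langle D_1\rangle^{-2}\|$, which is finite and $t$-independent by relative boundedness; the price is the weight $\langle D_1/t^\lambda\rangle^{2}$, which is commuted through a smooth spatial cutoff $\chi$, producing a main term again controlled by Proposition \ref{prop3}, a first-order term carrying an extra $t^{-1}$, and a remainder $O(t^{-2+2\lambda})$ --- all integrable. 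Your alternative, H\"older against $\|V_{jk}^{\rm sing}(t^\lambda\cdot)\|_{L^2}=t^{-\lambda d/2}\|V_{jk}^{\rm sing}\|_{L^2}$ (legitimate, since a compactly supported $L^q$ function with $q\geqslant2$ lies in $L^2$) paired with decay of the free evolution in the forbidden region, can be made to work and is arguably more elementary than commutator estimates; but as written it has two gaps. First, the generic dispersive bound $\tau^{-d/2}$ with $\tau\sim t^{1-2\lambda}$ is \emph{not} integrable when $\lambda$ is close to $1/2$, so you cannot rely on ``$L^2$--$L^\infty$ mapping properties'' alone: you must prove a pointwise (or $L^\infty$) analogue of \eqref{prop3_1}, i.e.\ rapid decay $O_\nu(t^{-\nu(1-2\lambda)})$ of $e^{-{\rm i}t^{1-2\lambda}D_1^2/(2(1-2\lambda))}\psi$ on $|x_1|\leqslant\epsilon^2t^{1-2\lambda}/(1-2\lambda)$, which does follow from the same non-stationary-phase integration by parts. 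Second, $V_{jk}^{\rm sing}$ acts only on the $x_1$ variable while $\phi$ depends on all $d(N-1)$ variables, so the H\"older step must be carried out in mixed norms (H\"older in $x_1$, then $L^2$ in the remaining $d(N-2)$ variables via Minkowski's inequality). Once you commit to one of the two routes and write out that estimate, the proof is complete.
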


\begin{proof}[Proof of Proposition \ref{prop4}]
We prove the existence of \eqref{wave_operator2} for $t\rightarrow\infty$. The derivative at $t$ of $U_{\lambda}(t,T)^*e^{-{\rm i}t^{1-2\lambda}D^2/(2(1-2\lambda))}$ is
\begin{gather}
({\rm d}/{\rm d}t)U_{\lambda}(t,T)^*e^{-{\rm i}t^{1-2\lambda}D^2/(2(1-2\lambda))}\nonumber\\
={\rm i}\sum_{1\leqslant j<k\leqslant N}U_{\lambda}(t,T)^*V_{jk}(t^\lambda(r_k-r_j))e^{-{\rm i}t^{1-2\lambda}D^2/(2(1-2\lambda))}.
\end{gather}
By Assumption \ref{ass},
\begin{equation}
\|V_{jk}^{\rm bdd}(t^\lambda(r_k-r_j))F(\sqrt{\mu_{jk}}|r_k-r_j|>\epsilon^2t^{1-2\lambda}/(1-2\lambda))\|\lesssim t^{-\rho(1-\lambda)}\label{prop4_1}
\end{equation}
holds for any $0<\epsilon\leqslant\epsilon_0$. Whereas we have
\begin{equation}
V_{jk}^{\rm sing}(t^\lambda(r_k-r_j))F(\sqrt{\mu_{jk}}|r_k-r_j|>\epsilon^2t^{1-2\lambda}/(1-2\lambda))\langle D/t^\lambda\rangle^{-2}=0\label{prop4_2}
\end{equation}
for $t\gg1$ because $V_{jk}^{\rm sing}$ is compactly supported. By Proposition \ref{prop3}, we clearly have
\begin{gather}
\|V_{jk}^{\rm bdd}(t^\lambda(r_k-r_j))F(\sqrt{\mu_{jk}}|r_k-r_j|\leqslant\epsilon^2t^{1-2\lambda}/(1-2\lambda))e^{-{\rm i}t^{1-2\lambda}D^2/(2(1-2\lambda))}\phi\|\nonumber\\
\lesssim t^{(d/2-\nu)(1-2\lambda)}\|\langle r_k-r_j\rangle^\nu\phi\|\label{prop4_3}
\end{gather}
because $V_{jk}^{\rm bdd}$ is bounded. Let $\chi\in C^\infty(\mathbb{R}^d)$ such that
\begin{equation}
\chi(x)=
\begin{cases}
\ 1&\mbox{if}\quad|x|\leqslant2\\
\ 0&\mbox{if}\quad|x|\geqslant3.
\end{cases}
\end{equation}
Reconstructing the Jacobi coordinates for $(j,k)\not=(1,2)$, we have
\begin{gather}
\|V_{jk}^{\rm sing}(t^\lambda(r_k-r_j))F(\sqrt{\mu_{jk}}|r_k-r_j|\leqslant\epsilon^2t^{1-2\lambda}/(1-2\lambda))e^{-{\rm i}t^{1-2\lambda}D^2/(2(1-2\lambda))}\phi\|\nonumber\\
\leqslant\|V_{jk}^{\rm sing}(t^\lambda x_1/\sqrt{\mu_{jk}})\chi((1-2\lambda)x_1/(\epsilon^2t^{1-2\lambda}))e^{-{\rm i}t^{1-2\lambda}D_1^2/(2(1-2\lambda))}\phi\|.
\end{gather}
Noting that $\|V_{jk}^{\rm sing}(t^\lambda x_1/\sqrt{\mu_{jk}})\langle\sqrt{\mu_{jk}}D_1/t^\lambda\rangle^{-2}\|=\|V_{jk}^{\rm sing}(x_1)\langle D_1\rangle^{-2}\|$ and that
\begin{gather}
{\rm i}[D_1^2,\chi((1-2\lambda)x_1/(\epsilon^2t^{1-2\lambda}))]=((1-2\lambda)/\epsilon^2t^{1-2\lambda})(\nabla\chi)((1-2\lambda)x_1/(\epsilon^2t^{1-2\lambda}))\cdot D_1\nonumber\\
-{\rm i}((1-2\lambda)/\epsilon^2t^{1-2\lambda})^2(\Delta\chi)((1-2\lambda)x_1/(\epsilon^2t^{1-2\lambda})),
\end{gather}
we have
\begin{gather}
\|\langle\sqrt{\mu_{jk}}D_1/t^\lambda\rangle^2\chi((1-2\lambda)x_1/(\epsilon^2t^{1-2\lambda}))e^{-{\rm i}t^{1-2\lambda}D_1^2/(2(1-2\lambda))}\phi\|\nonumber\\
\lesssim\|\chi((1-2\lambda)x_1/(\epsilon^2t^{1-2\lambda}))e^{-{\rm i}t^{1-2\lambda}D_1^2/(2(1-2\lambda))}\phi\|\nonumber\\
+t^{-1}\|(\nabla\chi)((1-2\lambda)x_1/(\epsilon^2t^{1-2\lambda}))\cdot e^{-{\rm i}t^{1-2\lambda}D_1^2/(2(1-2\lambda))}D_1\phi\|+t^{-2+2\lambda}\|\phi\|.\label{prop4_4}
\end{gather}
Choosing $\epsilon>0$ sufficiently small, we can use Proposition \ref{prop3} again for the first and second terms on the right side of \eqref{prop4_4}. By \eqref{prop4_1}, \eqref{prop4_2}, \eqref{prop4_3}, and \eqref{prop4_4}. We therefore have 
\begin{equation}
\|({\rm d}/{\rm d}t)U_{\lambda}(t,T)^*e^{-{\rm i}t^{1-2\lambda}D^2/(2(1-2\lambda))}\phi\|\lesssim t^{-\rho(1-\lambda)}+t^{(d/2-\nu)(1-2\lambda)}+t^{-2+2\lambda}
\end{equation}
for $t\gg1$. We note that $\rho(1-\lambda)>1$, $2-2\lambda>1$, and $(\nu-d/2)(1-2\lambda)>1$ for $\nu\gg1$. By the Cook-Kuroda method (\cite[Theorem XI.4]{ReSi2}), this completes the proof. 
\end{proof}

\section{Inverse Scattering}\label{inverse}
To prove Theorem \ref{thm1}, it is convenient to choose different Jacobi coordinates from those of \eqref{jacobi1}. According to \cite{EnWe, VaWe, We}, we focus on the pair of particles $1$ and $2$ and prove that $V_{12,1}=V_{12,2}$ because the proofs for the other pairs for $j$ and $k$ can be demonstrated in the same analogy. Let $y_{12}=r_2-r_1$ be the relative position between $r_1$ and $r_2$, and
\begin{equation}
y_j=r_j-(m_1r_1+m_2r_2)/(m_1+m_2)
\end{equation}
for $3\leqslant j\leqslant N$ be the relative position between $r_j$ and the center of mass for $r_1$ and $r_2$. Their conjugate momentums are given by
\begin{equation}
-{\rm i}\nabla_{y_{12}}=\mu_{12}(-{\rm i}\nabla_{r_2}/m_2-(-{\rm i}\nabla_{r_1})/m_1)\label{d_12}
\end{equation}
and
\begin{equation}
-{\rm i}\nabla_{y_j}=\mu_j(-{\rm i}\nabla_{r_j}/m_j-(-{\rm i}\nabla_{r_1}-{\rm i}\nabla_{r_2})/(m_1+m_2))
\end{equation}
for $3\leqslant j\leqslant N$ where
\begin{equation}
\mu_{12}=m_1m_2/(m_1+m_2)
\end{equation}
and
\begin{equation}
\mu_j=m_j(m_1+m_2)/(m_j+m_1+m_2)
\end{equation}
for $3\leqslant j\leqslant N$ are the reduced masses respectively. Because of the relations
\begin{gather}
\sum_{j=1}^Nm_jr_j^2=Mr_{\rm cm}^2+\mu_{12}y_{12}^2+\sum_{j=3}^{N}\mu_jy_j^2,\\
\sum_{j=1}^N\Delta_{r_j}/m_j=\Delta_{\rm cm}/M+\Delta_{y_{12}}/\mu_{12}+\sum_{j=3}^{N}\Delta_{y_j}/\mu_j,
\end{gather}
we rewrite the free Hamiltonian $H_0(t)$ in \eqref{free3} using $D=(D_{12},D_3,\ldots,D_N)$ and $x=(x_{12},x_3\ldots,x_N)$ with $x_{12}=\sqrt{\mu_{12}}y_{12}$ and $x_j=\sqrt{\mu_j}y_j$ for $3\leqslant j\leqslant N$. The relative momentum $D_{jk}$ is defined such that $D_{1k}=D_{2k}=D_k-D_{12}$ for $3\leqslant k\leqslant N$ and $D_{jk}=D_k-D_j$ for $3\leqslant j<k\leqslant N$. These coordinates are different from those in \cite{EnWe, VaWe, We} because of the scalings of $x_{12}$ and $x_j$ for $3\leqslant j\leqslant N$. By virtue of these scalings, we omit the effects of reduced masses in our notations. Even in these coordinates, the wave operators \eqref{wave_operator1} exist by the same proofs in Section \ref{existence}. In particular, the strong limits
\begin{equation}
W_\lambda^\pm=\slim_{t\rightarrow\pm\infty}U(t,0)^*U_{0,\lambda}(t)
\end{equation}
also exist, and we define
\begin{equation}
S_\lambda(V)=(W_\lambda^+)^*W_\lambda^-.
\end{equation}
Because $W^\pm=W_\lambda^\pm U_{0,\lambda}(s_\pm)^*U_0(s_\pm,0)$ holds for $s_+\geqslant T$ and $s_-\leqslant -T$, we have
\begin{equation}
S(V)=U_0(s_+,0)^*U_{0,\lambda}(s_+)S_\lambda(V)U_{0,\lambda}(s_-)^*U_0(s_-,0).\label{unitary_equiv}
\end{equation}
This implies that $S(V_1)=S(V_2)$ is equivalent to $S_\lambda(V_1)=S_\lambda(V_2)$.

$H_0(t)\equiv H_0=D^2/2+\omega^2 x^2/2$ is the time-independent harmonic oscillator for $|t|<T$, and its time-evolution $e^{-{\rm i}tH_0}$ is governed by the Mehler formula in \cite[Section 2.2]{BoCaHaMi} and \cite[Theorem 5.29]{LoHiBe}, which is
\begin{equation}
e^{-{\rm i}tH_0}=\mathscr{M}_{d(N-1)}(\tan\omega t/\omega)\mathscr{D}_{d(N-1)}(\sin\omega t/\omega)\mathscr{F}_{d(N-1)}\mathscr{M}(\tan\omega t/\omega)\label{mehler1}
\end{equation}
for $\omega t\notin\pi\mathbb{Z}$, where $\mathscr{M}_{d(N-1)}$ and $\mathscr{D}_{d(N-1)}$ are multiplication and dilation given by
\begin{gather}
\mathscr{M}_{d(N-1)}(t)\phi(x)=e^{{\rm i}x^2/(2t)}\phi(x),\\
\mathscr{D}_{d(N-1)}(t)\phi(x)=({\rm i}t)^{-d(N-1)/2}\phi(x/t).
\end{gather}
Moreover, 
\begin{equation}
e^{-{\rm i}tH_0}={\rm i}^{d(N-1)/2}\mathscr{M}_{d(N-1)}(-\cot\omega t/\omega)\mathscr{D}_{d(N-1)}(\cos\omega t)e^{-{\rm i}\tan\omega tD^2/(2\omega)}\label{mehler2}
\end{equation}
for $\omega t\notin(\pi/2)\mathbb{Z}$.

For $v\in\mathbb{R}^d$, the normalization is $\hat{v}=v/|v|$. Let $\hat{v}, e_3,\ldots,e_{N}\in\mathbb{R}^d$ be unit vectors pointing in mutually different directions. We define $v_{jk}=v_k-v_j$ for $1\leqslant j<k\leqslant N$ where $v_1=-v/2$, $v_2=v/2$, and $v_j=|v|^2e_j$ for $3\leqslant j\leqslant N$. Let $\Phi_0\in\mathscr{S}(\mathbb{R}^{d(N-1)})$ be $\Phi_0=\phi_1\otimes\phi_2$ such that $\mathscr{F}_d\phi_1\in C_0^\infty(\mathbb{R}^d)$ and $\mathscr{F}_{d(N-2)}\phi_2\in C_0^\infty(\mathbb{R}^{d(N-2)})$ with $\|\phi_2\|_{L^2(\mathbb{R}^{d(N-2)})}=1$. In addition, we assume that $\supp\mathscr{F}_d\phi_1\subset\{\xi_{12}\in\mathbb{R}^d\bigm||\xi_{12}|<\eta_{12}\}$ for some $\eta_{12}>0$ and $\supp\mathscr{F}_{d(N-2)}\phi_2\subset\{(\xi_3,\ldots,\xi_N)\in\mathbb{R}^{d(N-2)}\bigm||\xi_3|<1,\ldots,|\xi_N|<1\}$. We define $\Phi_v=\mathscr{T}_v\Phi_0$ for
\begin{equation}
\mathscr{T}_v=e^{{\rm i}v\cdot x_{12}}\prod_{j=3}^Ne^{{\rm i}v_j\cdot x_j}.
\end{equation}
Then there exists $f_{jk}\in C_0^\infty(\mathbb{R}^d)$ with $\supp f_{jk}\subset\{\xi\in\mathbb{R}^d\bigm||\xi|\leqslant\eta_{jk}\}$ where $\eta_{1k}=\eta_{2k}=1+\eta_{12}$ for $3\leqslant k\leqslant N$ and $\eta_{jk}=2$ for $3\leqslant j<k\leqslant N$ such that
\begin{equation}
\Phi_v=\mathscr{T}_vf_{jk}(D_{jk})\Phi_0=f_{jk}(D_{jk}-v_{jk})\Phi_v.\label{phi_v}
\end{equation}

To apply the Enss-Weder time-dependent method \cite{EnWe}, the following reconstruction formula is essential. By virtue of this formula, the uniqueness of the potential functions follows from the injective property of the Radon transform.

\begin{thm}\label{thm5}
Let $\Phi_v$ be defined as \eqref{phi_v} and $\Psi_v$ have the same properties for $\Psi_0=\psi_1\otimes\phi_2$. Then
\begin{equation}
\lim_{|v|\rightarrow\infty}|v|({\rm i}(S_\lambda(V)-1)\Phi_v,\Psi_v)=\int_{-\infty}^\infty(V_{12}(r_2-r_1+\hat{v}t)\phi_1,\psi_1)_{L^2(\mathbb{R}^d)}{\rm d}t
\end{equation}
holds for $V=\sum_{1\leqslant j<k\leqslant k}V_{jk}$, which satisfies Assumption \ref{ass}.
\end{thm}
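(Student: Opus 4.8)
The plan is to run the Enss--Weder time-dependent scheme of \cite{EnWe, We, VaWe} in the factorized Jacobi variables fixed above, reducing the $N$-body matrix element to the two-body reconstruction of \cite{Is7}. First I would record the Duhamel identity. Since $U(t,0)$ and $U_{0,\lambda}(t)$ are generated by $H(t)=H_0(t)+V$ and $H_0(t)$ respectively, one has $({\rm d}/{\rm d}t)[U(t,0)^*U_{0,\lambda}(t)]={\rm i}U(t,0)^*VU_{0,\lambda}(t)$, whence, using that $W_\lambda^+$ is isometric so that $S_\lambda(V)-1=(W_\lambda^+)^*(W_\lambda^--W_\lambda^+)$,
\[
W_\lambda^--W_\lambda^+=-{\rm i}\int_{-\infty}^\infty U(t,0)^*VU_{0,\lambda}(t)\,{\rm d}t,
\]
\[
({\rm i}(S_\lambda(V)-1)\Phi_v,\Psi_v)=\int_{-\infty}^\infty(VU_{0,\lambda}(t)\Phi_v,U(t,0)W_\lambda^+\Psi_v)\,{\rm d}t,
\]
the $t$-integral converging by the estimates of Section \ref{existence}.

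Second, I would replace the fully evolved state $U(t,0)W_\lambda^+\Psi_v$ by the free one $U_{0,\lambda}(t)\Psi_v$. Their difference equals ${\rm i}\int_t^\infty U(t,s)VU_{0,\lambda}(s)\Psi_v\,{\rm d}s$ by the same identity, so the resulting error is bounded by $\bigl(\int_{-\infty}^\infty\|VU_{0,\lambda}(t)\Phi_v\|\,{\rm d}t\bigr)\bigl(\int_{-\infty}^\infty\|VU_{0,\lambda}(s)\Psi_v\|\,{\rm d}s\bigr)$. The central observation is that, by Proposition \ref{prop2} and the factorizations \eqref{factorization+}--\eqref{factorization-}, the two unitary phase/dilation factors cancel and $\|V_{jk}(r_k-r_j)U_{0,\lambda}(t)\Phi_v\|=\|V_{jk}(|t|^\lambda(r_k-r_j))e^{\mp{\rm i}|t|^{1-2\lambda}D^2/(2(1-2\lambda))}\Phi_v\|$; these norms are therefore governed by exactly the dilated free evolution controlled in Propositions \ref{prop3} and \ref{prop4}. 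Splitting $V_{jk}=V_{jk}^{\rm bdd}+V_{jk}^{\rm sing}$ and smoothing the singular part by $\langle D/|t|^\lambda\rangle^{-2}$ as in the proof of Proposition \ref{prop4}, each time integral is $O(|v|^{-1})$, so the error is $O(|v|^{-2})$ and vanishes after multiplication by $|v|$.

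Third, I would evaluate the surviving free--free term $\sum_{1\le j<k\le N}\int_{-\infty}^\infty(V_{jk}(|t|^\lambda(r_k-r_j))e^{\mp{\rm i}|t|^{1-2\lambda}D^2/(2(1-2\lambda))}\Phi_v,e^{\mp{\rm i}|t|^{1-2\lambda}D^2/(2(1-2\lambda))}\Psi_v)$, separating the pair $(1,2)$ from the rest. For the distinguished pair, since $V_{12}$ acts only on $x_{12}$, the propagator factor is a tensor product over the Jacobi factors, and $\Phi_0=\phi_1\otimes\phi_2$, $\Psi_0=\psi_1\otimes\phi_2$ share the factor $\phi_2$ with $\|\phi_2\|=1$, the $(x_3,\dots,x_N)$ integration yields the constant $\|\phi_2\|^2=1$ and the term collapses to the two-body free--free expression for $\phi_1,\psi_1$; its limit, after multiplication by $|v|$, is computed in \cite{Is7} to be $\int_{-\infty}^\infty(V_{12}(r_2-r_1+\hat vt)\phi_1,\psi_1)_{L^2(\mathbb{R}^d)}\,{\rm d}t$. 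For every other pair $(j,k)\ne(1,2)$, the boost \eqref{phi_v} localizes $\Phi_v$ at relative momentum $v_{jk}$ with $|v_{jk}|\sim|v|^2$, in contrast with $|v_{12}|=|v|$; the corresponding packet escapes the effective range of $V_{jk}$ at this larger speed, so the estimates of the second step give a time integral of strictly smaller order and this contribution vanishes even after multiplication by $|v|$.

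The hard part will be the second and (for the off-diagonal pairs) the fourth point: obtaining propagation and decay estimates with the correct powers of $|v|$ in the presence of the time-decaying harmonic term and the singular potentials $V_{jk}^{\rm sing}$. What makes this tractable is precisely the reduction, via Proposition \ref{prop2} and the factorizations, of every such bound to the dilated free Schr\"odinger evolution already analyzed in Propositions \ref{prop3}--\ref{prop4}, together with the velocity scale separation $|v_{12}|=|v|$ versus $|v_{jk}|\sim|v|^2$ built into the choice $v_1=-v/2$, $v_2=v/2$, $v_j=|v|^2e_j$, which guarantees that only the pair $(1,2)$ survives the high-velocity limit.
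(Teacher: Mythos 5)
Your overall architecture is the paper's: the Duhamel/Enss--Weder expansion of ${\rm i}(S_\lambda(V)-1)$, replacement of the interacting evolution by the free one at a cost of $O(|v|^{-2})$, elimination of all pairs $(j,k)\neq(1,2)$ through the velocity separation $|v_{12}|=|v|$ versus $|v_{jk}|\sim|v|^2$, and collapse of the surviving $(1,2)$ term onto the two-body reconstruction of \cite{Is7}. (Whether one writes $S_\lambda(V)-1=(W_\lambda^+-W_\lambda^-)^*W_\lambda^-$ and perturbs the state on the $\Phi_v$ side, as the paper does via Lemma \ref{lem9}, or $(W_\lambda^+)^*(W_\lambda^--W_\lambda^+)$ and perturbs the $\Psi_v$ side, as you do, is immaterial.)

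The gap is in where you claim the crucial bounds $\int_{-\infty}^\infty\|V_{jk}(r_k-r_j)U_{0,\lambda}(t)\Phi_v\|\,{\rm d}t=O(|v_{jk}|^{-1})$ come from. Propositions \ref{prop3} and \ref{prop4} concern the dilated free evolution alone; they contain no dependence on $v$ whatsoever, so they can only yield integrability in $t$, never a decaying power of $|v_{jk}|$. The actual mechanism (the paper's Lemmas \ref{lem7} and \ref{lem8}) is to insert $f_{jk}(D_{jk}-v_{jk})$ via \eqref{phi_v}, conjugate the free propagator by $e^{{\rm i}v_{jk}\cdot x_{12}}$ so that the argument of $V_{jk}$ acquires the classical drift $\sin\omega t\,v_{jk}/\omega$ (resp.\ $t^{1-2\lambda}v_{jk}/(1-2\lambda)$), and then integrate in $t$ using the decay of $V_{jk}^{\rm bdd}$ and the compact support of $V_{jk}^{\rm sing}$; the change of variables along this drift is what produces the factor $|v_{jk}|^{-1}$. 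Relatedly, your identity $\|V_{jk}(r_k-r_j)U_{0,\lambda}(t)\Phi_v\|=\|V_{jk}(|t|^\lambda(r_k-r_j))e^{\mp{\rm i}|t|^{1-2\lambda}D^2/(2(1-2\lambda))}\Phi_v\|$ holds only for $|t|\geqslant T$; on $|t|<T$ the modified free dynamics is the genuine harmonic oscillator $e^{-{\rm i}tH_0}$, and one must invoke the Mehler formula \eqref{mehler2}, split the time axis at $|t|=\pi/(2\omega)$ (or $\pi/(4\omega)$ for the singular part, where $\langle D_{12}/\cos\omega t\rangle^{-2}$ smooths $V_{jk}^{\rm sing}$), and use the normalization $\pi/(2\omega)\leqslant T<\pi/\omega$. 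These steps are not optional refinements: without them neither the $O(|v|^{-2})$ error bound nor the vanishing of the $(j,k)\neq(1,2)$ contributions is actually established.
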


It seems difficult to study the time evolution of $U_{0}(t,0)$ directly for all $t\in\mathbb{R}$. Instead, we analyze $e^{-{\rm i}tH_0}$ for $|t|<T$ and $e^{\mp{\rm i}|t|^{1-2\lambda}D^2/(2(1-2\lambda))}$ for $|t|\geqslant T$ according to \cite{Is7}. We define $U_{0,\lambda}(t)=e^{-{\rm i}tH_0}$ for $|t|<T$. In the proofs below, we can assume that
\begin{equation}
\pi/(2\omega)\leqslant T<\pi/\omega
\end{equation} 
without loss of generality, as in \cite{Is7}. 

\begin{lem}\label{lem7}
Let $\Phi_v$ be as in Theorem \ref{thm5}. Then
\begin{equation}
\int_{-\infty}^\infty\|V_{jk}^{\rm bdd}(r_k-r_j)U_{0,\lambda}(t)\Phi_v\|{\rm d}t=O(|v_{jk}|^{-1})
\end{equation}
holds as $|v|\rightarrow\infty$ for any $1\leqslant j<k\leqslant N$.
\end{lem}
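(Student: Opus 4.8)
The plan is to split the integral as $\int_{-\infty}^\infty=\int_{|t|\geq T}+\int_{|t|<T}$ and, in each regime, to peel off the phase and dilation factors hidden in $U_{0,\lambda}(t)$ so that only a genuine free propagator $e^{-{\rm i}sD^2/2}$ acts on $\Phi_v$. I would use repeatedly that the multiplication operator $V_{jk}^{\rm bdd}(r_k-r_j)$ commutes with every multiplication-by-phase operator, so that the Gaussian and Mehler phases $e^{{\rm i}\lambda x^2/(2t)}$ and $\mathscr{M}_{d(N-1)}(\cdot)$ disappear under the norm, whereas the dilation generators $e^{-{\rm i}\lambda\log|t|A}$ and $\mathscr{D}_{d(N-1)}(\cdot)$ only rescale the argument of $V_{jk}^{\rm bdd}$. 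The rate $O(|v_{jk}|^{-1})$ should then come from the Enss--Weder observation that a free wave packet of relative velocity $\sim v_{jk}$ spends time $O(|v_{jk}|^{-1})$ in the effective support of $V_{jk}^{\rm bdd}$; in the pair-adapted Jacobi coordinates (for the pair $1,2$ these are already at hand and $\Phi_0=\phi_1\otimes\phi_2$) the estimate is genuinely a problem in the single relative variable $x_{jk}$, the orthogonal directions contributing the factor $\|\phi_2\|=1$, exactly as in the two-body treatment of \cite{Is7}.

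For $t\geq T$ (and symmetrically for $t\leq-T$) I would insert the explicit form \eqref{factorization+} of $U_{0,\lambda}(t)$, commute $V_{jk}^{\rm bdd}(r_k-r_j)$ through $e^{{\rm i}\lambda x^2/(2t)}$, and apply Proposition \ref{prop2} to turn $e^{-{\rm i}\lambda\log tA}$ into the rescaling $r_k-r_j\mapsto t^\lambda(r_k-r_j)$, obtaining
\[
\|V_{jk}^{\rm bdd}(r_k-r_j)U_{0,\lambda}(t)\Phi_v\|=\|V_{jk}^{\rm bdd}(t^\lambda(r_k-r_j))e^{-{\rm i}t^{1-2\lambda}D^2/(2(1-2\lambda))}\Phi_v\|.
\]
Writing $s=t^{1-2\lambda}/(1-2\lambda)$ and using the Galilean relation for the boost $\mathscr{T}_v$, which moves the center of the relative coordinate to $\sim sv_{jk}$, the potential is evaluated at an argument of size $\sim t^{\lambda}s|v_{jk}|=t^{1-\lambda}|v_{jk}|/(1-2\lambda)$ while its spread remains of lower order; hence $|V_{jk}^{\rm bdd}(x)|\lesssim\langle x\rangle^{-\rho}$ gives an integrand $\lesssim\langle t^{1-\lambda}|v_{jk}|\rangle^{-\rho}$ up to a rapidly decaying correction controlled as in Proposition \ref{prop3}. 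The substitution $u=t^{1-\lambda}|v_{jk}|$ then yields $\int_T^\infty\langle t^{1-\lambda}|v_{jk}|\rangle^{-\rho}\,dt=O(|v_{jk}|^{-\rho})$, the $t$-convergence being exactly the hypothesis $\rho>1/(1-\lambda)$; since $\rho>1$ this entire regime is $o(|v_{jk}|^{-1})$.

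For $|t|<T$ I would use the Mehler formula \eqref{mehler2}: the phase $\mathscr{M}_{d(N-1)}(-\cot\omega t/\omega)$ is discarded under the norm, and $V_{jk}^{\rm bdd}(r_k-r_j)\mathscr{D}_{d(N-1)}(\cos\omega t)=\mathscr{D}_{d(N-1)}(\cos\omega t)V_{jk}^{\rm bdd}(\cos\omega t\,(r_k-r_j))$ reduces the integrand to $\|V_{jk}^{\rm bdd}(\cos\omega t\,(r_k-r_j))e^{-{\rm i}\tau D^2/2}\Phi_v\|$ with $\tau=\tan\omega t/\omega$. Changing the integration variable from $t$ to $\tau$ (so that $dt=d\tau/(1+\omega^2\tau^2)$ and $\cos\omega t=(1+\omega^2\tau^2)^{-1/2}$) and again boosting the relative coordinate by $\sim\tau v_{jk}$ reduces the whole piece to the standard free-evolution integral. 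The advertised $O(|v_{jk}|^{-1})$ is produced by the range $|\tau|\lesssim|v_{jk}|^{-1}$, where the weight and the scaling $\cos\omega t$ are both $\approx1$ and the integrand is $O(1)$, so the packet crosses the support of $V_{jk}^{\rm bdd}$ in time $\sim|v_{jk}|^{-1}$.

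The main obstacle is the focusing point $\omega t=\pi/2$ lying inside $(-T,T)$, since $\pi/(2\omega)\leq T<\pi/\omega$: there $\cos\omega t\to0$ and $\tan\omega t\to\infty$, so \eqref{mehler2} degenerates, the harmonic propagator compresses the spatial spread of the packet, and the effective free time $\tau$ runs off to infinity. The point to make precise is that this degeneration is exactly the change of variables $\tau=\tan\omega t/\omega$ turning the finite interval into the full free evolution over $\tau\in\mathbb{R}$; the compensating factor $\cos\omega t\sim(\omega\tau)^{-1}$, together with the weight $(1+\omega^2\tau^2)^{-1}$, keeps the argument of $V_{jk}^{\rm bdd}(\cos\omega t\,(r_k-r_j))$ of size $\sim|v_{jk}|/\omega$ and the $\tau$-integral convergent, so the large-$\tau$ (i.e. near-focusing) contribution is only $O(|v_{jk}|^{-\rho})$. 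Combining the three regimes leaves the $O(|v_{jk}|^{-1})$ produced near $\tau=0$; for pairs $(j,k)\neq(1,2)$ one uses instead the momentum localization $\Phi_v=f_{jk}(D_{jk}-v_{jk})\Phi_v$ of \eqref{phi_v} together with $|v_{jk}|\sim|v|^2\to\infty$.
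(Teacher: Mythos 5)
Your proposal is correct and follows essentially the same route as the paper: split at $|t|=T$, use the Mehler formula \eqref{mehler2} for $|t|<T$ and the factorization \eqref{factorization+} for $|t|\geqslant T$, conjugate by the boost so that $V_{jk}^{\rm bdd}$ is evaluated along the classical trajectory $\sim\sin\omega t\,v_{jk}/\omega$ resp.\ $\sim t^{1-\lambda}v_{jk}/(1-2\lambda)$ in the pair-adapted Jacobi coordinate, and control the dispersive spread by a nonstationary-phase (Proposition \ref{prop3}-type) estimate; the paper carries out exactly this reduction and then delegates the resulting quantitative one-pair estimates to the two-body proofs in \cite{Is7}. Your treatment of the focusing point $\omega t=\pi/2$ and the convergence conditions $\rho(1-\lambda)>1$ and $\rho>1$ agree with what those cited estimates require.
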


\begin{proof}[Proof of Lemma \ref{lem7}]
We divide the integral such that
\begin{equation}
\int_{-\infty}^\infty=\int_{|t|<T}+\int_{|t|\geqslant T}
\end{equation}
and first consider the integral on $|t|<T$. By \eqref{mehler2} and \eqref{phi_v}, we have
\begin{gather}
\|V_{jk}^{\rm bdd}(r_k-r_j)e^{-{\rm i}tH_0}\Phi_v\|=\|V_{jk}^{\rm bdd}(\cos\omega t(r_k-r_j))e^{-{\rm i}\tan\omega tD^2/(2\omega)}f_{jk}(D_{jk}-v_{jk})\Phi_v\|\nonumber\\
=\|V_{jk}^{\rm bdd}(\cos\omega tx_{12}/\sqrt{\mu_{jk}})e^{-{\rm i}\tan\omega tD_{12}^2/(2\omega)}f_{jk}(D_{12}-v_{jk})\Phi_v\|\nonumber\\
=\|V_{jk}^{\rm bdd}((\cos\omega tx_{12}+\sin\omega tv_{jk}/\omega)/\sqrt{\mu_{jk}})e^{-{\rm i}\tan\omega tD_{12}^2/(2\omega)}f_{jk}(D_{12})e^{-{\rm i}v_{jk}\cdot x_{12}}\Phi_v\|,\label{lem7_1}
\end{gather}
where we have reconstructed the Jacobi coordinates from $y_{12}=r_k-r_j$ for $(j,k)\not=(1,2)$ and used the relation
\begin{equation}
e^{-{\rm i}v_{jk}\cdot x_{12}}e^{-{\rm i}\tan\omega tD_{12}^2/(2\omega)}e^{{\rm i}v_{jk}\cdot x_{12}}=e^{-{\rm i}\tan\omega t|v_{jk}|^2/(2\omega)}e^{-{\rm i}\tan\omega tD_{12}\cdot v_{jk}/\omega} e^{-{\rm i}\tan\omega tD_{12}^2/(2\omega)}.\label{lem7_2}
\end{equation}
From \cite[(39), (40), and (43) in the proof of Lemma 2.3]{Is7}, we have
\begin{equation}
\int_{|t|<T}\|V_{jk}^{\rm bdd}(r_k-r_j)e^{-{\rm i}tH_0}\Phi_v\|{\rm d}t=O(|v_{jk}|^{-1}).\label{lem7_3}
\end{equation}
We next consider the integral on $|t|\geqslant T$. In particular, we consider $t\geqslant T$. By \eqref{factorization+} and \eqref{phi_v}, we have
\begin{gather}
\|V_{jk}^{\rm bdd}(r_k-r_j)U_{0,\lambda}(t)\Phi_v\|=\|V_{jk}^{\rm bdd}(t^\lambda(r_k-r_j))e^{-{\rm i}t^{1-2\lambda}D^2/(2(1-2\lambda))}f_{jk}(D_{jk}-v_{jk})\Phi_v\|\nonumber\\
=\|V_{jk}^{\rm bdd}(t^\lambda x_{12}/\sqrt{\mu_{jk}})e^{-{\rm i}t^{1-2\lambda}D_{12}^2/(2(1-2\lambda))}f_{jk}(D_{12}-v_{jk})\Phi_v\|\nonumber\\
=\|V_{jk}^{\rm bdd}((t^\lambda x_{12}+t^{1-2\lambda}v_{jk}/(1-2\lambda))/\sqrt{\mu_{jk}})e^{-{\rm i}t^{1-2\lambda}D_{12}^2/(2(1-2\lambda))}f_{jk}(D_{12})e^{-{\rm i}v_{jk}\cdot x_{12}}\Phi_v\|,\label{lem7_4}
\end{gather}
where we have reconstructed the Jacobi coordinates from $y_{12}=r_k-r_j$ for $(j,k)\not=(1,2)$ and used the relation
\begin{gather}
e^{-{\rm i}v_{jk}\cdot x_{12}}e^{-{\rm i}t^{1-2\lambda}D_{12}^2/(2(1-2\lambda))}e^{{\rm i}v_{jk}\cdot x_{12}}\nonumber\\
=e^{-{\rm i}t^{1-2\lambda}|v_{jk}|^2/(2(1-2\lambda))}e^{-{\rm i}t^{1-2\lambda}D_{12}\cdot v_{jk}/(1-2\lambda)} e^{-{\rm i}t^{1-2\lambda}D_{12}^2/(2(1-2\lambda))}.\label{lem7_5}
\end{gather}
By \cite[(47) and (51) in the proof of Lemma 2.3]{Is7}, we have
\begin{equation}
\int_{|t|\geqslant T}\|V_{jk}^{\rm bdd}(r_k-r_j)U_{0,\lambda}(t)\Phi_v\|{\rm d}t=O(|v_{jk}|^{-1})\label{lem7_6}.
\end{equation}
\eqref{lem7_3} and \eqref{lem7_6} complete the proof.
\end{proof}

\begin{lem}\label{lem8}
Let $\Phi_v$ be as in Theorem \ref{thm5}. Then
\begin{equation}
\int_{-\infty}^\infty\|V_{jk}^{\rm sing}(r_k-r_j)U_{0,\lambda}(t)\Phi_v\|{\rm d}t=O(|v_{jk}|^{-1})
\end{equation}
holds as $|v|\rightarrow\infty$ for any $1\leqslant j<k\leqslant N$.
\end{lem}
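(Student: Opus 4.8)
The plan is to follow the architecture of the proof of Lemma~\ref{lem7} essentially verbatim at the level of the propagators, replacing the bounded factor $V_{jk}^{\rm bdd}$ by the singular factor $V_{jk}^{\rm sing}$, and to supply one new estimate that accommodates the fact that $V_{jk}^{\rm sing}$ lies only in $L^q$ rather than in $L^\infty$. First I would split $\int_{-\infty}^\infty=\int_{|t|<T}+\int_{|t|\geq T}$. On $|t|<T$ I would insert the Mehler representation \eqref{mehler2}, and on $t\geq T$ (symmetrically on $t\leq-T$) the factorization \eqref{factorization+}. In each piece I would reconstruct the Jacobi coordinates from $y_{12}=r_k-r_j$ for $(j,k)\neq(1,2)$ and apply the boost identities \eqref{lem7_2} and \eqref{lem7_5} to convert the momentum boost $v_{jk}$ into a rigid translation of the argument of the potential, producing expressions identical in form to \eqref{lem7_1} and \eqref{lem7_4} but with $V_{jk}^{\rm sing}$ in place of $V_{jk}^{\rm bdd}$: the potential is evaluated at $(\cos\omega t\,x_{12}+\sin\omega t\,v_{jk}/\omega)/\sqrt{\mu_{jk}}$ on $|t|<T$ and at $(t^\lambda x_{12}+t^{1-2\lambda}v_{jk}/(1-2\lambda))/\sqrt{\mu_{jk}}$ on $t\geq T$, standing in front of a free propagator acting only on the $x_{12}$-variable.

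The new ingredient is the estimate of the single-pair quantity $\|V_{jk}^{\rm sing}(\cdots)e^{-\mathrm{i}sD_{12}^2/2}f_{jk}(D_{12})\tilde\Phi\|$, where $s$ denotes the effective free time ($s=\tan\omega t/\omega$ on $|t|<T$ and $s=t^{1-2\lambda}/(1-2\lambda)$ on $t\geq T$) and $\tilde\Phi=e^{-\mathrm{i}v_{jk}\cdot x_{12}}\Phi_v$ has $x_{12}$-momentum localized in a fixed compact set. Since $V_{jk}^{\rm sing}$ is not bounded, I would replace the $L^\infty$ bound used in Lemma~\ref{lem7} by the same device as in the existence proof: insert a smoothing factor of the type $\langle cD_{12}\rangle^{-2}$, which commutes with the free propagator, exploit the scale invariance of $\|V_{jk}^{\rm sing}(c^{-1}\cdot)\langle cD_{12}\rangle^{-2}\|=\|V_{jk}^{\rm sing}\langle D_{12}\rangle^{-2}\|$ already used in \eqref{prop4_2}, and control the remaining smoothed propagator through the propagation estimate of Proposition~\ref{prop3}. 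An equivalent route is Hölder's inequality with $1/2=1/q+1/p$, reducing the bound to the $L^p$-dispersive decay of $e^{-\mathrm{i}sD_{12}^2/2}f_{jk}(D_{12})\tilde\Phi$ for the admissible exponents of \eqref{sing}. In either case the compact support of $V_{jk}^{\rm sing}$, together with the rigid translation by $\sin\omega t\,v_{jk}/\omega$ (resp.\ $t^{1-2\lambda}v_{jk}/(1-2\lambda)$), confines the nonvanishing contribution to a $t$-window of width $O(|v_{jk}|^{-1})$, which is the source of the decay.

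The $N$-body problem reduces to the two-body one because $V_{jk}^{\rm sing}(r_k-r_j)$ depends only on the reconstructed variable $x_{12}$, the free propagator factors as a tensor product over the Jacobi variables, and $f_{jk}(D_{12})$ together with $\|\phi_2\|=1$ lets the spectator coordinates $x_3,\dots,x_N$ be integrated out against a bounded constant. Consequently the two integrals are exactly the ones estimated for the two-body singular potential in \cite{Is7}, and I would invoke the corresponding bounds there to conclude that each of $\int_{|t|<T}$ and $\int_{|t|\geq T}$ is $O(|v_{jk}|^{-1})$; adding the two pieces gives the claim.

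The step I expect to be the main obstacle is the behaviour near the turning points of the classical flow. On $|t|<T$ the effective time $s=\tan\omega t/\omega$ diverges as $\omega t\to\pm\pi/2$, and the change of variables produces a Jacobian $\sec^2\omega t$; one must verify that the dispersive decay in $s$ is strong enough, for the range of $q$ allowed by \eqref{sing}, to keep the $t$-integral convergent and of size $O(|v_{jk}|^{-1})$ uniformly as $|v|\to\infty$. The analogous balancing of the scalings $t^\lambda$ and $t^{1-2\lambda}$ against $|v_{jk}|$, together with the decay of the spreading wave packet at the translated support, governs the region $t\geq T$. These are precisely the delicate points resolved in the two-body analysis of \cite{Is7}, so the essential effort lies in checking that the Jacobi reconstruction above genuinely reduces the $N$-body integrals to that setting, rather than in any fresh dispersive input.
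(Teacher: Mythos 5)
Your overall architecture matches the paper's: reduce to the two-body pair $(j,k)$ by reconstructing the Jacobi coordinates from $y_{12}=r_k-r_j$, use the boost identities \eqref{lem7_2} and \eqref{lem7_5} to turn the momentum shift into a translation of the potential's argument, insert a momentum-smoothing factor of the form $\langle D_{12}/c\rangle^{-2}$ with the scale invariance $\|V_{jk}^{\rm sing}(cx)\langle D/c\rangle^{-2}\|=\|V_{jk}^{\rm sing}(x)\langle D\rangle^{-2}\|$ to handle the $L^q$ singularity, and quote the two-body estimates of \cite{Is7}. The bound $\|\langle x_{12}\rangle^2\langle D_{12}/c\rangle^2 g_{jk}(D_{12})e^{-{\rm i}v_{jk}\cdot x_{12}}\Phi_v\|\lesssim1$ (with $f_{jk}=f_{jk}g_{jk}$), and its higher-order analogue used for $|t|\geqslant T$, is exactly the new $N$-body input the paper supplies, and you have the right idea there.

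The genuine gap is the region near the classical turning point. You correctly flag it as ``the main obstacle'' but your proposed two-piece split $\int_{|t|<T}+\int_{|t|\geqslant T}$, with a single uniform smoothing argument on $|t|<T$, does not survive it: since $\pi/(2\omega)\leqslant T$, the point $|t|=\pi/(2\omega)$ where $\cos\omega t=0$ lies in the interval, and there the compensating factor $\langle D_{12}/\cos\omega t\rangle^{2}$ applied to the state blows up (while a fixed-scale smoothing factor makes $\|V_{jk}^{\rm sing}(\cos\omega t\,\cdot)\langle D_{12}\rangle^{-2}\|$ blow up instead). The paper resolves this by splitting $|t|<T$ further into $|t|\leqslant\pi/(4\omega)$, where the smoothing-factor argument works because $\cos\omega t$ is bounded below, and $\pi/(4\omega)<|t|<T$, where it abandons the free-propagator representation altogether and instead uses the Mehler formula \eqref{mehler3} to rewrite everything in terms of the unitary $d$-dimensional harmonic-oscillator propagator $e^{-{\rm i}tH_{0,12}}$, for which \cite[(88) and (93)]{Is7} give the required $O(|v_{jk}|^{-1})$ bound (there $|\sin\omega t|$ is bounded below, so the translation of the compactly supported $V_{jk}^{\rm sing}$ is genuinely of size $|v_{jk}|$). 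This intermediate region and the switch of propagator are a necessary extra idea, not merely a verification, so as written your argument does not close.
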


\begin{proof}[Proof of Lemma \ref{lem8}]
We divide the integral such that
\begin{equation}
\int_{-\infty}^\infty=\int_{|t|\leqslant\pi/(4\omega)}+\int_{\pi/(4\omega)<|t|<T}+\int_{|t|\geqslant T}
\end{equation}
and first consider the integral on $|t|\leqslant\pi/(4\omega)$. As we similarly had for \eqref{lem7_1}, we have
\begin{gather}
\|V_{jk}^{\rm sing}(r_k-r_j)e^{-{\rm i}tH_0}\Phi_v\|=\|V_{jk}^{\rm sing}((\cos\omega tx_{12}+\sin\omega tv_{jk}/\omega)/\sqrt{\mu_{jk}})\langle D_{12}/\cos\omega t\rangle^{-2}\nonumber\\
\times e^{-{\rm i}\tan\omega tD_{12}^2/(2\omega)}f_{jk}(D_{12})\langle D_{12}/\cos\omega t\rangle^2e^{-{\rm i}v_{jk}\cdot x_{12}}\Phi_v\|.
\end{gather}
We take $g_{jk}\in C_0^\infty(\mathbb{R}^d)$ such that $f_{jk}=f_{jk}g_{fk}$. Noting that
\begin{gather}
\|\langle x_{12}\rangle^2\langle D_{12}/\cos\omega t\rangle^2g_{jk}(D_{12})e^{-{\rm i}v_{jk}\cdot x_{12}}\Phi_v\|\nonumber\\
\lesssim\|\langle D_{12}\rangle^2g_{jk}(D_{12})\|\|\langle x_{12}\rangle^2\Phi_0\|+\|[x_{12}^2,\langle D_{12}\rangle^2g_{jk}(D_{12})]e^{-{\rm i}v_{jk}\cdot x_{12}}\Phi_v\|\lesssim1,\label{lem8_1}
\end{gather}
we have
\begin{equation}
\int_{|t|\leqslant\pi/(4\omega)}\|V_{jk}^{\rm sing}(r_k-r_j)e^{-{\rm i}tH_0}\Phi_v\|=O(|v_{jk}|^{-1})\label{lem8_2}
\end{equation}
by \cite[(74) and (79) in the proof of Lemma 3.2]{Is7}. We next consider the integral on $\pi/(4\omega)<|t|<T$. We define the $d$-dimensional harmonic oscillator
\begin{equation}
H_{0,12}=D_{12}^2/2+w^2x_{12}^2/2
\end{equation}
and note that the Mehler formula
\begin{equation}
e^{-{\rm i}H_{0,12}}={\rm i}^{d/2}\mathscr{M}_d(-\cot\omega t/\omega)\mathscr{D}_d(\cos\omega t)e^{-{\rm i}\tan\omega tD_{12}^2/(2\omega)}\label{mehler3}
\end{equation}
also holds. As we similarly had for \eqref{lem7_1}, we have
\begin{gather}
\|V_{jk}^{\rm sing}(r_k-r_j)e^{-{\rm i}tH_0}\Phi_v\|\nonumber\\
=\|V_{jk}^{\rm sing}((\cos\omega tx_{12}+\sin\omega tv_{jk}/\omega)/\sqrt{\mu_{jk}})e^{-{\rm i}\tan\omega tD_{12}^2/(2\omega)}e^{-{\rm i}v_{jk}\cdot x_{12}}\Phi_v\|\nonumber\\
=\|V^{\rm sing}((x_{12}+\sin\omega tv_{jk}/\omega)/\sqrt{\mu_{jk}})e^{-{\rm i}H_{0,12}}e^{-{\rm i}v_{jk}\cdot x_{12}}\Phi_v\|\label{lem8_3}
\end{gather}
using \eqref{mehler3}. We therefore have
\begin{equation}
\int_{\pi/(4\omega)<|t|<T}\|V_{jk}^{\rm sing}(r_k-r_j)e^{-{\rm i}tH_0}\Phi_v\|=O(|v_{jk}|^{-1})\label{lem8_4}
\end{equation}
by \cite[(88) and (93) in the proof of Lemma 3.2]{Is7}. Finally, we consider the integral on $|t|\geqslant T$, in particular $t\geqslant T$. As we similarly had for \eqref{lem7_4}, we have
\begin{gather}
\|V_{jk}^{\rm sing}(r_k-r_j)U_{0,\lambda}(t)\Phi_v\|=\|V_{jk}^{\rm sing}((t^\lambda x_{12}+t^{1-2\lambda}v_{jk}/(1-2\lambda))/\sqrt{\mu_{jk}})\langle D_{12}/t^\lambda\rangle^{-2}\nonumber\\
\times e^{-{\rm i}t^{1-2\lambda}D_{12}^2/(2(1-2\lambda))}f_{jk}(D_{12})\langle D_{12}/t^\lambda\rangle^2e^{-{\rm i}v_{jk}\cdot x_{12}}\Phi_v\|.
\end{gather}
As in \eqref{lem8_1}, we have
\begin{equation}
\|\langle x_{12}\rangle^2\langle D_{12}/t^\lambda\rangle^2g_{jk}(D_{12})e^{-{\rm i}v_{jk}\cdot x_{12}}\Phi_v\|\lesssim1
\end{equation}
for $g_{jk}\in C_0^\infty(\mathbb{R}^d)$ such that $f_{jk}=f_{jk}g_{jk}$. We inductively have
\begin{equation}
\|\langle x_{12}\rangle^2\langle D_{12}/t^\lambda\rangle^Ng_{jk}(D_{12})e^{-{\rm i}v_{jk}\cdot x_{12}}\Phi_v\|\lesssim1
\end{equation}
for any $N\in\mathbb{N}$. Therefore, by \cite[(97) and (99) in the proof of Lemma 3.2]{Is7}, we have
\begin{equation}
\int_{|t|\geqslant T}\|V_{jk}^{\rm sing}(r_k-r_j)U_{0,\lambda}(t)\Phi_v\|=O(|v_{jk}|^{-1}).\label{lem8_5}
\end{equation}
\eqref{lem8_2}, \eqref{lem8_4}, and \eqref{lem8_5} imply that Lemma \ref{lem8} holds.
\end{proof}

By virtue of Lemmas \ref{lem7} and \ref{lem8}, we immediately have the following Lemma as in \cite[Corollary 2.3]{EnWe} (see also \cite{AdFuIs, AdKaKaTo, AdMa, AdTs1, AdTs2, Is1, Is3, Is4, Is7, Is8, Ni1, Ni2, Ni3, VaWe, We}). We therefore omit its proof.
\begin{lem}\label{lem9}
Let $\Phi_v$ be as in Theorem \ref{thm5}. Then
\begin{equation}
\sup_{t\in\mathbb{R}}\|(U(t,0)W_\lambda^--U_{0,\lambda}(t))\Phi_v\|=O(|v|^{-1})
\end{equation}
holds as $|v|\rightarrow\infty$.
\end{lem}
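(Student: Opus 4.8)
The plan is to run Cook's method exactly as in \cite[Corollary 2.3]{EnWe}, now fed by Lemmas \ref{lem7} and \ref{lem8}. First I would rewrite the two evolutions to be compared as a single Duhamel integral. Using the cocycle identity $U(t,0)U(s,0)^*=U(t,s)$ together with the definition \eqref{wave_operator1} of $W_\lambda^-$, one has $U(t,0)W_\lambda^-\Phi_v=\slim_{s\rightarrow-\infty}U(t,s)U_{0,\lambda}(s)\Phi_v$. On each of the intervals $(-\infty,-T)$, $(-T,T)$ and $(T,\infty)$ the family $U_{0,\lambda}(\tau)$ solves the free equation $\partial_\tau U_{0,\lambda}(\tau)=-{\rm i}H_0(\tau)U_{0,\lambda}(\tau)$ --- for $|\tau|<T$ with generator $H_0$, and for $|\tau|\geqslant T$ through the factorizations \eqref{factorization+}--\eqref{factorization_free} --- while $\partial_\tau U(t,\tau)={\rm i}U(t,\tau)H(\tau)$. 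Since $H(\tau)-H_0(\tau)=V=\sum_{1\leqslant j<k\leqslant N}V_{jk}(r_k-r_j)$ is independent of $\tau$, differentiating $\tau\mapsto U(t,\tau)U_{0,\lambda}(\tau)\Phi_v$, integrating from $s$ to $t$, and letting $s\rightarrow-\infty$ gives
\[
\bigl(U(t,0)W_\lambda^--U_{0,\lambda}(t)\bigr)\Phi_v=-{\rm i}\int_{-\infty}^t U(t,\tau)\,V\,U_{0,\lambda}(\tau)\Phi_v\,{\rm d}\tau .
\]

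Next I would pass to norms. Because each $U(t,\tau)$ is unitary and the integrand is integrable (this being exactly the content of the two lemmas below), enlarging the domain of integration to all of $\mathbb{R}$ produces a bound that no longer depends on $t$:
\[
\sup_{t\in\mathbb{R}}\bigl\|\bigl(U(t,0)W_\lambda^--U_{0,\lambda}(t)\bigr)\Phi_v\bigr\|\leqslant\sum_{1\leqslant j<k\leqslant N}\int_{-\infty}^\infty\bigl\|V_{jk}(r_k-r_j)U_{0,\lambda}(\tau)\Phi_v\bigr\|\,{\rm d}\tau .
\]
Decomposing $V_{jk}=V_{jk}^{\rm bdd}+V_{jk}^{\rm sing}$ and applying Lemma \ref{lem7} to the bounded part and Lemma \ref{lem8} to the singular part, each summand is $O(|v_{jk}|^{-1})$ as $|v|\rightarrow\infty$.

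It then remains to check that the single pair $(1,2)$ dominates the sum, which is where the choice of velocities in Theorem \ref{thm5} is used. With $v_1=-v/2$, $v_2=v/2$, $v_j=|v|^2e_j$ for $3\leqslant j\leqslant N$, and $\hat{v},e_3,\ldots,e_N$ pairwise non-parallel, one has $v_{12}=v$, so $|v_{12}|=|v|$, whereas $v_{1k},v_{2k}=|v|^2e_k\pm v/2$ and $v_{jk}=|v|^2(e_k-e_j)$ for $3\leqslant j<k\leqslant N$ all obey $|v_{jk}|\gtrsim|v|^2$, the last because $|e_k-e_j|$ is bounded below by a positive constant. Hence every term other than $(1,2)$ is $O(|v|^{-2})$ and the whole right-hand side is $O(|v|^{-1})$, which is the assertion.

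Steps two and three are routine; the one point that needs genuine care is the validity of the Duhamel representation of the first step at the two instants $|\tau|=T$, where the coefficient $k(\tau)$ of \eqref{coeff}, and hence the generator driving $U_{0,\lambda}$, changes form. These times form a null set and so are harmless for the integral bound of the second step, but in differentiating $\tau\mapsto U(t,\tau)U_{0,\lambda}(\tau)\Phi_v$ one must confirm that no boundary contribution is produced there --- either because $U_{0,\lambda}$ is strongly continuous across $\pm T$ on $\mathscr{S}(\mathbb{R}^{d(N-1)})$, or, failing that, because any junction term $U(t,\pm T)\,[\,U_{0,\lambda}((\pm T)^+)-U_{0,\lambda}((\pm T)^-)\,]\Phi_v$ is itself $O(|v|^{-1})$. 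I expect this junction analysis, rather than any new decay estimate, to be the main obstacle, since the decisive analytic input is already furnished by Lemmas \ref{lem7} and \ref{lem8}; the remaining differentiability and the interchange of the limit $s\rightarrow-\infty$ with the integral are handled exactly as in the time-independent model of \cite{EnWe}.
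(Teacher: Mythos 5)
Your overall route---a Duhamel/Cook estimate fed by Lemmas \ref{lem7} and \ref{lem8}, followed by the observation that $|v_{12}|=|v|$ while $|v_{jk}|\gtrsim|v|^{2}$ for every other pair---is exactly the argument the paper has in mind: the paper omits the proof of Lemma \ref{lem9} entirely and refers to \cite[Corollary 2.3]{EnWe}, whose proof consists of your first three steps. Your second and third steps are correct as written. The problem is the junction issue you flagged in step one, and it is worse than you suggest: \emph{neither} of your two fallback options is available. First, $U_{0,\lambda}$ is genuinely discontinuous at $t=\pm T$. Comparing classical actions, the value of \eqref{factorization+} at $t=T$ satisfies
\begin{equation*}
U_{0,\lambda}(T^{+})^{*}\,x\,U_{0,\lambda}(T^{+})=T^{\lambda}x+\frac{T^{1-\lambda}}{1-2\lambda}D,
\qquad
U_{0,\lambda}(T^{-})^{*}\,x\,U_{0,\lambda}(T^{-})=e^{{\rm i}TH_0}xe^{-{\rm i}TH_0}=\cos\omega T\,x+\frac{\sin\omega T}{\omega}D,
\end{equation*}
and under the standing normalization $\pi/(2\omega)\leqslant T<\pi/\omega$ one has $\cos\omega T\leqslant 0<T^{\lambda}$, so the two unitaries implement different symplectic maps and cannot coincide. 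Second, precisely because $\mathcal{W}=U_{0,\lambda}(T^{+})^{*}U_{0,\lambda}(T^{-})$ is a metaplectic operator with nontrivial classical part, the junction term is \emph{not} $O(|v|^{-1})$: conjugating $\mathcal{W}$ by the boost $\mathscr{T}_v$ produces, up to a phase, a phase-space translation of length proportional to $|v|$, so $(U_{0,\lambda}(T^{+})\Phi_v,U_{0,\lambda}(T^{-})\Phi_v)\rightarrow 0$ and $\|[U_{0,\lambda}(T^{+})-U_{0,\lambda}(T^{-})]\Phi_v\|\rightarrow\sqrt{2}\,\|\Phi_0\|$ as $|v|\rightarrow\infty$.

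Consequently the Duhamel identity of your first step, taken literally across $\pm T$, acquires $O(1)$ boundary terms and your proof does not close. A quick sanity check makes the point vivid: with $V=0$ and $t=0$ the left-hand side of the lemma is $\|(W_\lambda^{-}-1)\Phi_v\|$, and with the literal piecewise definition one computes $W_\lambda^{-}=e^{-{\rm i}TH_0}U_{0,\lambda}(-T)$, which is a fixed nontrivial metaplectic operator, so this quantity does not even tend to zero. What a complete argument must do is carry along the constant unitaries that connect the three pieces of $U_{0,\lambda}$, i.e.\ run Cook's method for the true free propagator $U_0(t,0)$, which on $\pm t\geqslant T$ equals $U_{0,\lambda}(t)\,U_{0,\lambda}(\pm T)^{*}e^{\mp{\rm i}TH_0}$, and correspondingly re-derive the decay estimates of Lemmas \ref{lem7} and \ref{lem8} for the modified vectors $U_{0,\lambda}(\pm T)^{*}e^{\mp{\rm i}TH_0}\Phi_v$, which are translated in position as well as boosted in momentum. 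Be aware that this difficulty is invisible in \cite[Corollary 2.3]{EnWe}, where the comparison dynamics is a single one-parameter group and no junctions occur; the paper's appeal to that corollary passes over exactly the point you identified, so you should not expect to find the missing step there.
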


\begin{proof}[Proof of Theorem \ref{thm5}]
We have
\begin{gather}
{\rm i}(S_\lambda(V)-1)={\rm i}(W_\lambda^+-W_\lambda^-)^*W_\lambda^-={\rm i}\int_{-\infty}^\infty(({\rm d}/{\rm d}t)U(t,0)^*U_{0,\lambda}(t))^*W_\lambda^-{\rm d}t\nonumber\\
=\sum_{1\leqslant j<k\leqslant N}\int_{-\infty}^\infty U_{0,\lambda}(t)^*V_{jk}(r_k-r_j)U(t,0)W_\lambda^-{\rm d}t
\end{gather}
and
\begin{equation}
|v|(({\rm i}S(V)-1)\Phi_v,\Psi_v)=|v|\int_{-\infty}^\infty(V_{12}(r_2-r_1)U_{0,\lambda}(t)\Phi_v,U_{0,\lambda}(t)\Psi_v){\rm d}t+R(v),
\end{equation}
where
\begin{gather}
R(v)=|v|\sum_{(j,k)\not=(1,2)}\int_{-\infty}^\infty(V_{jk}(r_k-r_j)U_{0,\lambda}(t)\Phi_v,U_{0,\lambda}(t)\Psi_v){\rm d}t\nonumber\\
+|v|\sum_{1\leqslant j<k\leqslant N}\int_{-\infty}^\infty((U(t,0)W_\lambda^--U_{0,\lambda}(t))\Phi_v,V_{jk}(r_k-r_j)U_{0,\lambda}(t)){\rm d}t.
\end{gather}
Noting that
\begin{equation}
\int_{-\infty}^\infty\|V_{jk}(r_k-r_j)U_{0,\lambda}(t)\Phi_v\|{\rm d}t=O(|v|^{-2})
\end{equation}
holds as $|v|\rightarrow\infty$ for $(j,k)\not=(1,2)$, we have $R(v)=O(|v|^{-1})$ by Lemmas \ref{lem7}, \ref{lem8}, and \ref{lem9}. We now prove
\begin{equation}
|v|\int_{-\infty}^\infty(V_{12}(r_2-r_1)U_{0,\lambda}(t)\Phi_v,U_{0,\lambda}(t)\Psi_v){\rm d}t\rightarrow\int_{-\infty}^\infty(V_{12}(r_2-r_1+\hat{v}t)\phi_1,\psi_1)_{L^2(\mathbb{R}^d)}{\rm d}t\label{thm5_1}
\end{equation}
as $|v|\rightarrow\infty$. We first focus on the term $V_{12}^{\rm bdd}$. We divide the integral such that
\begin{equation}
\int_{-\infty}^\infty=\int_{|t|<\pi/(2\omega)}+\int_{\pi/(2\omega)\leqslant|t|<T}+\int_{|t|\geqslant T}
\end{equation}
and consider the integrals on $|t|<\pi/(2\omega)$ and $\pi/(2\omega)\leqslant|t|<T$. Recalling that $\phi_2$ is normalized, we have
\begin{gather}
(V_{12}^{\rm bdd}(r_2-r_1)e^{-{\rm i}tH_0}\Phi_v,e^{-{\rm i}tH_0}\Psi_v)\nonumber\\
=(V_{12}^{\rm bdd}(\cos\omega tx_{12}/\sqrt{\mu_{12}})e^{-{\rm i}\tan\omega tD_{12}^2/(2\omega)}\Phi_v,e^{-{\rm i}\tan\omega tD_{12}^2/(2\omega)}\Psi_v)\nonumber\\
=(V_{12}^{\rm bdd}(x_{12}/\sqrt{\mu_{12}})e^{-{\rm i}H_{0,12}}e^{{\rm i}v\cdot x_{12}}\phi_1,e^{-{\rm i}H_{0,12}}e^{{\rm i}v\cdot x_{12}}\psi_1)_{L^2(\mathbb{R}^d)}\label{thm5_2}
\end{gather}
for $|t|<T$ by \eqref{mehler2} and \eqref{mehler3}. Therefore
\begin{gather}
|v|\int_{|t|<\pi/(2\omega)}(V_{12}^{\rm bdd}(r_2-r_1)e^{-{\rm i}tH_0}\Phi_v,e^{-{\rm i}tH_0}\Psi_v){\rm d}t\nonumber\\
\rightarrow\int_{-\infty}^\infty(V_{12}^{\rm bdd}(x_{12}/\sqrt{\mu_{12}}+\hat{v}t)\phi_1,\psi_1)_{L^2(\mathbb{R}^d)}{\rm d}t\label{thm5_3}
\end{gather}
and
\begin{equation}
|v|\int_{\pi/(2\omega)\leqslant|t|<T}(V_{12}^{\rm bdd}(r_2-r_1)e^{-{\rm i}tH_0}\Phi_v,e^{-{\rm i}tH_0}\Psi_v){\rm d}t\rightarrow0\label{thm5_4}
\end{equation}
hold as $|v|\rightarrow\infty$ by \cite[(64) and (65) in the proof of Theorem 2.1]{Is7}. Let $U_{0,12}(t,s)$ be the propagator for the $d$-dimensional time-dependent harmonic oscillator
\begin{equation}
H_{0,12}(t)=D_{12}^2/2+k(t)x_{12}^2/2
\end{equation}
and define
\begin{equation}
U_{0,12,\lambda}(t)=e^{{\rm i}\lambda x_{12}^2/(2t)}e^{-{\rm i}\lambda\log tA_{12}}e^{-{\rm i}t^{1-2\lambda}D_{12}^2/(2(1-2\lambda))}\label{factorization_d+}
\end{equation}
if $t\geqslant T$ and
\begin{equation}
U_{0,12,\lambda}(t)=e^{{\rm i}\lambda x_{12}^2/(2t)}e^{-{\rm i}\lambda\log(-t)A_{12}}e^{{\rm i}(-t)^{1-2\lambda}D_{12}^2/(2(1-2\lambda))}\label{factorization_d-}
\end{equation}
if $t\leqslant-T$, where $A_{12}=(D_{12}\cdot x_{12}+x_{12}\cdot D_{12})/2$. By \eqref{factorization+}, \eqref{factorization-}, \eqref{factorization_d+}, and \eqref{factorization_d-}, we have
\begin{gather}
(V_{12}^{\rm bdd}(r_2-r_1)U_{0,\lambda}(t)\Phi_v,U_{0,\lambda}(t)\Psi_v)\nonumber\\
=(V_{12}^{\rm bdd}(t^\lambda x_{12}/\sqrt{\mu_{12}})e^{-{\rm i}t^{1-2\lambda}D_{12}^2/(2(1-2\lambda))}\Phi_v,e^{-{\rm i}t^{1-2\lambda}D_{12}^2/(2(1-2\lambda))}\Psi_v)\nonumber\\
=(V_{12}^{\rm bdd}(x_{12}/\sqrt{\mu_{12}})U_{0,12,\lambda}(t)e^{{\rm i}v\cdot x_{12}}\phi_1,U_{0,12,\lambda}(t)e^{{\rm i}v\cdot x_{12}}\psi_1)_{L^2(\mathbb{R}^d)}\label{thm5_5}
\end{gather}
for $|t|\geqslant T$. By \cite[(66) in the proof of Theorem 2.1]{Is7},
\begin{equation}
|v|\int_{|t|\geqslant T}(V_{12}^{\rm bdd}(r_2-r_1)U_{0,\lambda}(t)\Phi_v,U_{0,\lambda}(t)\Psi_v){\rm d}t\rightarrow0\label{thm5_6}
\end{equation}
holds as $|v|\rightarrow\infty$. We next focus on the term $V_{12}^{\rm sing}$. We divide the integral such that
\begin{equation}
\int_{-\infty}^\infty=\int_{|t|<\pi/(4\omega)}+\int_{\pi/(4\omega)\leqslant|t|<T}+\int_{|t|\geqslant T}
\end{equation}
and consider the integrals on $|t|<\pi/(4\omega)$ and $\pi/(4\omega)\leqslant|t|<T$. Because \eqref{thm5_2} also holds with $V_{12}^{\rm bdd}$ replaced by $V_{12}^{\rm sing}$, we have
\begin{gather}
|v|\int_{|t|<\pi/(4\omega)}(V_{12}^{\rm sing}(r_2-r_1)e^{-{\rm i}tH_0}\Phi_v,e^{-{\rm i}tH_0}\Psi_v){\rm d}t\nonumber\\
\rightarrow\int_{-\infty}^\infty(V_{12}^{\rm sing}(x_{12}/\sqrt{\mu_{12}}+\hat{v}t)\phi_1,\psi_1)_{L^2(\mathbb{R}^d)}{\rm d}t\label{thm5_7}
\end{gather}
and
\begin{equation}
|v|\int_{\pi/(4\omega)\leqslant|t|<T}(V_{12}^{\rm sing}(r_2-r_1)e^{-{\rm i}tH_0}\Phi_v,e^{-{\rm i}tH_0}\Psi_v){\rm d}t\rightarrow0\label{thm5_8}
\end{equation}
as $|v|\rightarrow\infty$ by \cite[(110) and (111) in the proof of Theorem 3.1]{Is7}. Finally, we consider the integral on $|t|\geqslant T$. Because \eqref{thm5_5} also holds with $V_{12}^{\rm bdd}$ replaced by $V_{12}^{\rm sing}$, we have
\begin{equation}
|v|\int_{|t|\geqslant T}(V_{12}^{\rm bdd}(r_2-r_1)U_{0,\lambda}(t)\Phi_v,U_{0,\lambda}(t)\Psi_v){\rm d}t\rightarrow0\label{thm5_9}
\end{equation}
as $|v|\rightarrow\infty$ by \cite[(112) in the proof of Theorem 3.1]{Is7}. Combining \eqref{thm5_3}, \eqref{thm5_4}, \eqref{thm5_6}, \eqref{thm5_7}, \eqref{thm5_8}, and \eqref{thm5_9}, we have \eqref{thm5_1}.
\end{proof}

\begin{proof}[Proof of Theorem \ref{thm1}]
We assume that $S_\lambda(V_1)=S_\lambda(V_2)$ because \eqref{unitary_equiv} holds. By the same computation as that of \cite[(77) in the proof of Lemma 3.2]{Is7}, we have the condition commonly referred to as the Enss condition
\begin{equation}
\int_0^\infty\|V_{12}(r_2-r_1)\langle D_{12}\rangle^{-2}F(|r_2-r_1|\geqslant R)\|_{\mathscr{B}(L^2(\mathbb{R}^d))}{\rm d}R<\infty.
\end{equation}
By virtue of Theorem \ref{thm5} and the Plancherel formula for the Radon transform (\cite[Theorem 2.17 in Chapter 1]{He}), we have $V_{12,1}=V_{12,2}$ in the same way as for the proof of \cite[Theorem 1.1]{EnWe}.
\end{proof}

\bigskip
\noindent\textbf{Acknowledgments.} 
This work was supported by JSPS KAKENHI Grant Number JP21K03279. 

\end{document}